\documentclass{elsarticle}

\usepackage{graphicx}
\usepackage{color}
\usepackage{subfigure}
\usepackage{graphics}


\usepackage{enumerate}
\usepackage{tikz}
\usetikzlibrary{shapes,arrows,positioning}


\newcommand{\st}[1]{}

\usepackage{bbm}
\usepackage{amssymb}
\usepackage{amsmath}
\usepackage{amsfonts}
\usepackage{latexsym}
\usepackage{mathtools}
\usepackage{cuted}
\usepackage{verbatim}

\usepackage{textcomp}
\usepackage[latin1]{inputenc}

\usepackage{multirow}

\usepackage{booktabs}

\usepackage{url}


\usepackage{balance}


\usepackage{hyperref}
\usepackage[lined,boxed,linesnumbered,vlined]{algorithm2e}
\usepackage{algpseudocode}

\usepackage{etoolbox}



\def\equationautorefname~#1\null{%
  (#1)\null
}



\setlength{\algomargin}{6mm}
\SetKwInput{Problem}{problem}
\SetKwInput{Input}{input}
\SetKwInput{Output}{output}
\SetAlgoSkip{smallskip} 
\SetAlgoInsideSkip{smallskip} %
\setlength{\AlCapSkip}{1mm} 
\SetKwComment{tcp}{\#\ }{}
\SetKwComment{tcc}{\#\ }{}
    
\usepackage{amsthm}
\newtheoremstyle{saber}
  {}
  {}
  {\itshape}
  {}
  {\bfseries}
  {.}
  {.5em}
  {}

\theoremstyle{saber}
\ifcsundef{theorem}{
\newtheorem{theorem}{Theorem}
}{}
\ifcsundef{corollary}{%
\newtheorem{corollary}{Corollary}
}{}
\ifcsundef{lemma}{%
\newtheorem{lemma}{Lemma}
}{}
\ifcsundef{definition}{%
\newtheorem{definition}{Definition}
}{}
\ifcsundef{fact}{%
\newtheorem{fact}{Fact}
}{}
\ifcsundef{remark}{%
}{}
\ifcsundef{property}{%
\newtheorem{property}{Property}
}{}

\ifx\theoremautorefname\undefined
\newcommand*{\theoremautorefname}{Theorem}
\fi
\ifx\lemmaautorefname\undefined
\newcommand*{\lemmaautorefname}{Lemma}
\fi
\ifx\definitionautorefname\undefined
\newcommand*{\definitionautorefname}{Definition}
\fi
\ifx\corollaryautorefname\undefined
\newcommand*{\corollaryautorefname}{Corollary}
\fi
\ifx\factautorefname\undefined
\newcommand*{\factautorefname}{Fact}
\fi
\ifx\propertyautorefname\undefined
\newcommand*{\propertyautorefname}{Property}
\fi

\ifx\argmax\undefined
\newcommand{\argmax}{\operatornamewithlimits{argmax}}
\fi
\ifx\argmin\undefined
\newcommand{\argmin}{\operatornamewithlimits{argmin}}
\fi
\ifx\limsup\undefined
\newcommand{\limsup}{\operatornamewithlimits{limsup}}
\fi
\ifx\liminf\undefined
\newcommand{\liminf}{\operatornamewithlimits{liminf}}
\fi
\ifx\norm\undefined
\newcommand{\norm}[1]{\lVert#1\rVert}
\fi
\ifx\abs\undefined
\newcommand{\abs}[1]{\lvert#1\rvert}
\fi
\ifx\set\undefined
\newcommand{\set}[1]{\left\{#1\right\}}
\fi
\ifx\mset\undefined
\newcommand{\mset}[1]{\lbrack #1\rbrack}
\fi
\ifx\etal\undefined
\newcommand{\etal}[1]{{\em #1 et al.}~}
\fi
\ifx\ie\undefined
\newcommand{\ie}{{\em i.e.,} }
\fi
\ifx\eg\undefined
\newcommand{\eg}{{\em e.g.,} }
\fi
\ifx\etc\undefined
\newcommand{\etc}{{\em etc.,} }
\fi
\ifx\wrt\undefined
\newcommand{\wrt}{{\em w.r.t.} }
\fi
\ifx\dotprod\undefined
\newcommand{\dotprod}[2]{
  \langle #1, #2 \rangle
}
\fi
\ifx\iid\undefined 
\newcommand{\iid}{i.i.d.}
\fi
\ifx\bigParenthes\undefined 
\newcommand{\bigParenthes}[1]{
  \big(#1\big)
}
\fi
\ifx\bigBracket\undefined 
\newcommand{\bigBracket}[1]{
  \big\{#1\big\}
}
\fi
\ifx\bigSqBracket\undefined 
\newcommand{\bigSqBracket}[1]{
  \big[#1\big]
}
\fi
\ifx\BigParenthes\undefined 
\newcommand{\BigParenthes}[1]{
  \Big(#1\Big)
}
\fi
\ifx\BigBracket\undefined 
\newcommand{\BigBracket}[1]{
  \Big\{#1\Big\}
}
\fi
\ifx\BigSqBracket\undefined 
\newcommand{\BigSqBracket}[1]{
  \Big[#1\Big]
}
\fi
\ifx\biggParenthes\undefined 
\newcommand{\biggParenthes}[1]{
  \bigg(#1\bigg)
}
\fi
\ifx\biggBracket\undefined 
\newcommand{\biggBracket}[1]{
  \bigg\{#1\bigg\}
}
\fi
\ifx\biggSqBracket\undefined 
\newcommand{\biggSqBracket}[1]{
  \bigg[#1\bigg]
}
\fi
\ifx\BiggParenthes\undefined 
\newcommand{\BiggParenthes}[1]{
  \Bigg(#1\Bigg)
}
\fi
\ifx\BiggBracket\undefined 
\newcommand{\BiggBracket}[1]{
  \Bigg\{#1\Bigg\}
}
\fi
\ifx\BiggSqBracket\undefined 
\newcommand{\BiggSqBracket}[1]{
  \Bigg[#1\Bigg]
}
\fi
\ifx\bracket\undefined
\newcommand{\bracket}[1]{
  \{#1\}
}
\fi
\ifx\parenthes\undefined
\newcommand{\parenthes}[1]{
  (#1)
}
\fi
\ifx\sqBracket\undefined
\newcommand{\sqBracket}[1]{
  [#1]
}
\fi
\ifx\prob\undefined 
\newcommand{\prob}[1]{\mathbb{P}[#1]}
\fi
\ifx\Prob\undefined 
\newcommand{\Prob}[1]{\mathbb{P}\big[#1\big]}
\fi
\ifx\probb\undefined 

\fi
\ifx\expect\undefined 
\newcommand{\expect}[1]{\mathbb{E}[#1]}
\fi
\ifx\Expect\undefined 
\newcommand{\Expect}[1]{\mathbb{E}\big[#1\big]}
\fi
\ifx\expectt\undefined 
\newcommand{\expectt}[1]{\mathbb{E}\bigg[#1\bigg]}
\fi
\ifx\walk\undefined 
\makeatletter
\newcommand{\walk}[1]{%
  \@tempswafalse
  \@for\next:=#1\do
    {\if@tempswa\!\!\rightarrow\!\!\else\@tempswatrue\fi\next}%
}
\makeatother
\fi
\ifx\seq\undefined 
\newcommand{\seq}{\!=\!}
\fi
\ifx\sminus\undefined 
\newcommand{\sminus}{\!-\!}
\fi
\ifx\sm\undefined 
\newcommand{\sm}[1]{\!#1\!}
\fi

\ifx\union\undefined
\newcommand{\union}[2]{#1\!\cup\!#2}
\fi

\ifx\hl\undefined 
\newcommand{\hl}[2]{{\color{#1}#2}}
\fi

\ifx\hlb\undefined 
\newcommand{\hlb}[1]{{\color{blue}#1}}
\fi

\ifx\hlr\undefined 
\newcommand{\hlr}[1]{{\color{red}#1}}
\fi

\ifx\hlp\undefined 
\newcommand{\hlp}[1]{{\color{purple}#1}}
\fi

\ifx\hlc\undefined 
\newcommand{\hlc}[3]{{\color{#1}#2 [remark: #3]}}
\fi

\setlength{\textfloatsep}{8.0pt plus 1.0pt minus 2.0pt}
\ifcsdef{sitemize}{}{

}

\ifcsdef{senumerate}{}{

}

\ifcsdef{spenumerate}{}{

}

\ifcsdef{slenumerate}{}{

}


\usepackage[sort]{cleveref}
\usepackage[group-separator={,},group-minimum-digits=4]{siunitx}

\renewcommand{\hlb}[1]{{\color{black}#1}}
\renewcommand{\hlr}[1]{{\color{black}#1}}
\renewcommand{\hlp}[1]{{\color{black}#1}}
\hyphenation{op-tical net-works semi-conduc-tor}

\usepackage{wrapfig}
\usepackage{etoolbox}

\usepackage{IEEEtrantools}
\patchcmd{\IEEEproofindentspace}{2\parindent}{0pt}{}{}

\newcounter{rcommentno}[section]


\makeatletter
\newcommand{\rcommentnoautorefname}{Comment \#\@gobble}
\makeatother
\colorlet{red}{black}

\begin{document}
\sloppy

\begin{frontmatter}
\title{QPS-r: A Cost-Effective Iterative Switching Algorithm for Input-Queued Switches}
\author[gatechcs]{Long~Gong}
\ead{gonglong@gatech.edu}
\author[gatechcs]{Jun~Xu}
\ead{jx@cc.gatech.edu}
\author[gatechcs]{Liang~Liu}
\ead{liuliang142857@gatech.edu}
\author[gatechisye]{Siva~Theja~Maguluri}
\ead{siva.theja@gatech.edu}

\address[gatechcs]{School of Computer Science, Georgia Institute of Technology}
\address[gatechisye]{School of Industrial \& Systems Engineering, Georgia Institute of Technology}

\begin{abstract}
In an input-queued switch, a crossbar schedule,
or a matching between the input ports
and the output ports needs to be computed in each switching cycle, or time slot.
Designing switching algorithms with very low computational complexity, that lead to high throughput and small delay is a challenging problem.
There appears to be a fundamental tradeoff between
the computational complexity of the
switching
algorithm and the resultants throughput and delay. 
Parallel maximal matching algorithms (adapted for switching)
appear to have stricken a sweet spot in this tradeoff, and prior work has shown the following performance guarantees.
%
Using maximal matchings in every time slot results in at least 50\% switch throughput
and {\it order-optimal} ({\it i.e.,} independent of the switch size $N$) average delay bounds  for
various traffic arrival processes.
On the other hand,
their computational complexity can be as low as $O(\log^2 N)$ per port/processor, which is
much lower than those of the algorithms such as maximum
weighted matching which ensures better throughput performance. 

In this work, we propose QPS-r, a parallel iterative
switching algorithm that has the lowest possible computational complexity:  $O(1)$ per port.
Using Lyapunov stability analysis, we show that the
throughput and delay performance is identical to that of maximal matching algorithm. 
Although QPS-r builds upon an existing technique called Queue-Proportional Sampling (QPS), in this paper, we provide analytical guarantees on its throughput and delay under {\it i.i.d.} traffic as well as a Markovian traffic model which can model many realistic traffic patterns. 
We also demonstrate that QPS-3 (running 3 iterations) has comparable empirical throughput and delay performances as iSLIP (running $\log_2 N$ iterations),
a refined and optimized representative maximal matching algorithm adapted for switching.

\end{abstract}

\begin{keyword}
Crossbar scheduling;  input-queued switch; Lyapunov stability analysis
\end{keyword}


\end{frontmatter}

\section{Introduction}\label{sec:intro}

The volume of network traffic across the Internet and in data-centers continues to grow relentlessly,
thanks to existing and emerging data-intensive applications, such as big data analytics, cloud computing,
and video streaming.  At the same time, the number of network-connected devices is exploding,
fueled by the wide adoption of
smart phones and the emergence of the Internet of things.
To transport and ``direct" this massive amount of traffic to their respective destinations,
switches and routers capable of
connecting a large number of ports (called {\it high-radix}~\cite{Cakir2015HighRadix,Cakir2016HighRadixCrossbar}) and operating
at very high line rates are badly needed.

Many present day high-performance switching systems in Internet routers and
data-center switches employ an input-queued crossbar to interconnect their input ports and output ports. 
In an $N\times N$ input-queued crossbar switch, each input port can be connected to only one output
port and vice versa in each switching cycle or time slot.
Hence, in every time slot, the switch needs to compute
a one-to-one \textit{matching} between input and output ports ({\it i.e.}, the crossbar schedule).
A major research challenge in designing high-link-rate 
high-radix
switches 
is to develop algorithms that can
compute ``high quality'' matchings -- {\it i.e.,} those that result in
high switch throughput and low queueing
delays for packets -- in a few nanoseconds. 
Clearly, a suitable switching algorithm has to have very low
computational complexity,
yet output ``fairly good'' matching decisions most of time.

\subsection{The Family of Maximal Matchings}

A family of 
parallel iterative algorithms for computing {\it maximal matching}
(one to which no edge can be added for it to remain a matching, a definition that will be made precise in~\autoref{sec:background})
are arguably the best candidates for
switching 
in high-link-rate high-radix switches, because they have reasonably low computational complexities,
yet can provide fairly good 
throughput and delay performance  
guarantees.
More specifically,
using
maximal matchings as crossbar schedules results in at least 50\%
switch throughput in theory (and usually much higher throughput in practice)~\cite{Dai00Speedup}. 
In addition, it results in low packet delays that also have excellent scaling behaviors such as {\it order-optimal} ({\it i.e.,} independent of switch size $N$)
under various traffic arriving processes when the offered load is less than 50\% ({\it i.e.}, within the provable stability region)~\cite{Neely2008DelayMM}. 
In comparison, matchings of higher qualities such as maximum matching (with the largest possible number of edges) and 
maximum weighted matching (with the highest total edge weight) are much more expensive to compute. 
Hence, it is fair to say that, maximal matching algorithms overall
deliver the biggest ``bang'' (performance) for the ``buck'' (computational complexity). 

Unfortunately, parallel maximal matching algorithms are still not ``dirt cheap'' computationally.
More specifically, 
all existing parallel algorithms that compute maximal
matchings on {\it general} $N\times N$ bipartite graphs 
require a minimum of $O(\log N)$ iterations (rounds of message exchanges).
This minimum is attained by the classical algorithm of Israel and Itai~\cite{IsraelItai1986DistMaximalMatching};
the PIM algorithm~\cite{Anderson1993PIM} is a slight adaptation of this classical algorithm
to the switching context, and iSLIP~\cite{McKeown99iSLIP} further improves upon PIM by
reducing its per-iteration per-port computational complexity to $O(\log N)$ via
de-randomizing a computationally expensive ($O(N)$ complexity to be exact) operation in PIM.

\subsection{QPS-r: Bigger Bang for the Buck}

In this work, we propose QPS-r,
a parallel iterative switching algorithm that has the lowest possible computational complexity:  $O(1)$ per port.
More specifically, QPS-r requires
only $r$ (a small constant independent of $N$) iterations to compute a matching, and the computational
complexity of each iteration is only $O(1)$;  here QPS stands for Queue-Proportional Sampling, an add-on
technique proposed in~\cite{Gong2017QPS} that we will describe shortly.
Yet, even the matchings that QPS-1 (running only a single iteration) computes have the same (reasonably high) quality as maximal matchings
in the following sense:
Using such matchings as crossbar schedules results in exactly the same aforementioned provable
throughput and delay guarantees as using maximal matchings, as we will show using Lyapunov stability analysis.
QPS-r performs as well as maximal matching algorithms not just in theory:
We will show 
in~\autoref{sec:evaluation} 
that QPS-3 (running 3 iterations) has comparable empirical throughput and delay performances as iSLIP (running $\log_2 N$ iterations),
a refined and optimized representative maximal matching algorithm adapted for switching, under various workloads. 
Note that matchings that QPS-r computes are generally not maximal. 
QPS-r can make do with less (iterations) because the queue-proportional sampling operation
implicitly makes use of the 
queue (VOQ) length 
information, which maximal matching algorithms do not.  
One major contribution of this work
is to discover the family of (QPS-r)-generated matchings that is even more cost-effective. 


Although QPS-r builds on the QPS data structure and algorithm proposed in~\cite{Gong2017QPS},
our work on QPS-r is very different in three important aspects.  First, in~\cite{Gong2017QPS},
QPS was used only as an add-on to other 
switching 
algorithms such as iSLIP~\cite{McKeown99iSLIP} and SERENA~\cite{GiacconePrabhakarShah2003SERENA} by generating a starter matching for other switching algorithms to further refine,
whereas in this work, QPS-r is used only as a stand-alone algorithm.
%
%
Second, we are the first to discover and prove
that (QPS-r)-generated matchings and maximal matchings provide exactly the same aforementioned 
performance  
guarantees,
whereas in~\cite{Gong2017QPS}, no such mathematical similarity or connection was mentioned.
Third, the establishment of this mathematical similarity is an important theoretical contribution in itself,
because maximal matchings have long been established as a cost-effective family both in switching~\cite{Anderson1993PIM,McKeown99iSLIP} and
in wireless networking~\cite{Neely2008DelayMM,Neely2009MMDelay}, and with this connection we have
considerably enlarged this family.

Although we show that QPS-r has exactly the same throughput and delay bounds as those of maximal
matchings established in~\cite{Dai00Speedup,Neely2008DelayMM,Neely2009MMDelay}, our proofs are 
different 
for the following reason.
A {\it departure inequality} (see~\autoref{lemma:deterministic-depature-inequality}), satisfied by all maximal matching algorithms was used in the 
throughput analysis of~\cite{Dai00Speedup} and the delay
analysis of~\cite{Neely2008DelayMM,Neely2009MMDelay}.
This inequality, however, is not satisfied by QPS-r in general.
Instead, QPS-r satisfies this departure inequality in expectation, which is a much weaker guarantee. 
{\color{red}The methodological contributions of this work are twofold. 
First, we prove two theorems stating that this much weaker guarantee is sufficient for obtaining the same throughput and delay bounds under {\it i.i.d.} traffic arrivals. 
Second, we generalize both results to the more general Markovian arrivals.}

The rest of this paper is organized as follows. In \autoref{sec:background}, we provide some
background on the switching problem in input-queued crossbar switches.  In \autoref{sec:scheme},  we first review
QPS, and then describe QPS-r.
Then in~\Cref{sec:throughput-and-delay-analysis,sec:map}, 
we derive the throughput and
the queue length (and delay) bounds of QPS-r, 
followed by the
performance evaluation in~\autoref{sec:evaluation}. In~\autoref{sec: related-work}, we survey
related work
before concluding this paper in~\autoref{sec:conclusion}.

\section{Background on Crossbar Scheduling}\label{sec:background}

In this section, we provide a brief introduction to the crossbar scheduling (switching) problem, and describe and compare the aforementioned three different types of matchings.
Throughout this paper we adopt the standard assumption~\cite[Chapter 2, Page 21]{aweya_switch_router_2018} 
that all the incoming variable-length packets are first segmented into fixed-length
packets (also referred to as cells), and then reassembled at their respective output
ports before leaving the switch. Each fixed-length packet takes one time slot to switch.
We also assume that all input links/ports and output links/ports operate at the same normalized line rate of $1$,
and so do all wires and crosspoints inside the crossbar.

\subsection{Input-Queued Crossbar Switch}\label{subsec:sys-module}

In an $N\times N$ input-queued crossbar switch, 
each input port has $N$ Virtual Output Queues
(VOQs)~\cite{Tamir_HighperformanceMulti_1988}.  The $j^{th}$ VOQ at input port $i$
serves as a buffer for packets
going from input port $i$ to output port $j$. The use of VOQs solves the Head-of-Line
(HOL) blocking issue~\cite{Karol1987VOQHOL}, which
severely limits the throughput of the
switch system.

An $N\times N$ input-queued crossbar can be modeled as a 
bipartite graph, of which the
two disjoint vertex sets are the $N$ input ports and the $N$ output ports
respectively.
In this bipartite graph, there is an
edge between input port $i$ and output port $j$, if and only if
the $j^{th}$ VOQ at input port $i$,
the corresponding VOQ,
is nonempty.  
The weight of this edge is defined
as the length of ({\it i.e.,} the number of packets buffered at) this VOQ.
A set of such edges constitutes a {\it valid crossbar schedule}, or a {\it matching},
if any two of them do not share a common vertex. 
The weight of a matching
is the total weight of all the edges belonging to it ({\it i.e.,} the total length of all corresponding VOQs).

A matching 
$M$ can be represented as an $N\times N$ {\it sub-permutation matrix} (a $0$-$1$ matrix
that contains at most one entry of ``$1$'' in
each row and in each column) $S\!=\!(s_{ij})$ as follows:
$s_{ij}=1$ if and
only if the edge between input port $i$ and
output port $j$ is contained in $M$ ({\it i.e.,} input port $i$ is matched to output port $j$ in $M$).  To avoid any confusion,
only $S$ (not $M$) is used to denote a matching in the sequel, and it can be both a set (of edges) and a matrix.

\subsection{Maximal Matching}\label{subsec:maximal-matching}

As mentioned in~\autoref{sec:intro}, three types of matchings
play important roles in crossbar scheduling problems:
(I) maximal matchings, (II) maximum matchings, and (III) maximum
weighted matchings. A matching $S$ is
called a {\it maximal matching}, if it is no longer a matching,
when any edge not in $S$ is added to it.  A matching
with the largest possible number of edges is called a {\it maximum matching} or {\it maximum cardinality matching}.
Neither maximal matchings nor maximum matchings take into account the
weights of edges, whereas {\it maximum weighted matchings} do.
A maximum weighted matching is one that has the largest total weight among all matchings.
\st{In the switching context (where all edge weights are positive), }
By definition, any maximum matching or maximum weighted matching is
also a maximal matching, but neither converse is generally true.

As mentioned earlier, the family of maximal matchings has long been recognized as a cost-effective family for crossbar scheduling.
\st{As mentioned earlier, on one hand, efficient distributed algorithms exist for computing maximal matchings, and on the other
hand, using maximal matchings as crossbar schedules can provably result in at least $50\%$ throughput~\mbox{\cite{Dai00Speedup}}.
However, maximal matchings are not ``dirt cheap'' to compute:
It was proved in~\mbox{\cite{Kuhn2016MMBound}} that any distributed maximal matching algorithm requires
$\Omega(\sqrt{\log N/\log\log N})$ iterations and the smallest number of iterations needed by any existing distributed maximal matching
algorithm is $O(\log N)$~\mbox{\cite{IsraelItai1986DistMaximalMatching}}.}
%
%
%
%
Compared to maximal matching, maximum weighted matching (MWM) \hlb{({\it i.e.,} the well-known MaxWeight 
scheduler~\cite{Tassiulas90Max} in 
the context of scheduling transmissions in wireless networking)} is much less cost effective.
\hlb{Although MWM provides stronger provable guarantees such as 
$100$\% switch throughput~\cite{McKeownMekkittikulAnantharamEtAl1999,Tassiulas90MaxWeight} and $O(N)$ 
average packet delay~\cite{Shaf02DelayboundAppMWM} whenever the offered load is less than 100\% in 
theory (and usually much better empirical delay performance in practice as shown 
in~\cite{McKeownMekkittikulAnantharamEtAl1999}),}
the state-of-the-art serial MWM algorithm (suitable for switching) has a prohibitively high 
computational complexity of $O(N^{2.5}\log W)$~\cite{Duan2012MBWP}, where
$W$ is the maximum possible weight (length) of an edge (VOQ).
\st{Even though parallel or distributed MWM algorithms exist ({\it e.g.,}
\mbox{\cite{BayatiShahSharma2008MWMMaxProduct,Fayyazi04ParallelMWM,Fayyazi2006ParaMWM,Goldberg1993sublinearpMWM}}), they
either have a high per-processor complexity ({\it e.g.,} $O(N^2)$ complexity in~\mbox{\cite{BayatiShahSharma2008MWMMaxProduct}} when $N$ processors are used) or require a prohibitively large
number of processors or hardware processing units ({\it e.g.,} $O(N^3)$ or higher
in~\mbox{\cite{Fayyazi04ParallelMWM,Fayyazi2006ParaMWM}}).}  
By the same measure, maximum matching is not a great deal either:  
It is only slightly cheaper
to compute than MWM, yet using maximum matchings as crossbar schedules 
generally cannot \st{provably} guarantee 100\% throughput~\cite{Keslassy_Maximumsizematching_2003}.

\section{The QPS-\texorpdfstring{$r$}{r} Algorithm}\label{sec:scheme}

The QPS-r algorithm simply runs $r$ iterations of QPS (Queue-Proportional
Sampling)~\cite{Gong2017QPS} to arrive at a 
matching, so its computational 
complexity 
per port
is exactly $r$ times those of QPS.  
Since $r$ is a small constant, 
it is $O(1)$, same as that of QPS.
In the following two subsections, we describe QPS and QPS-r respectively in more details.


\subsection{Queue-Proportional Sampling (QPS)}\label{subsec:qps}

QPS was used in~\cite{Gong2017QPS} as an ``add-on'' to augment other switching algorithms as follows.  It generates a starter matching, which is then
populated ({\it i.e.}, adding more edges to it) and refined, by other switching algorithms such as iSLIP~\cite{McKeown99iSLIP} and SERENA~\cite{GiacconePrabhakarShah2003SERENA}, into a final matching.
To generate such a starter matching, QPS needs to run only one iteration, 
which
consists of two phases, namely,
a proposing phase and an accepting phase.  
We briefly describe them in this section for this paper to be self-contained.

\smallskip
\noindent
{Proposing Phase.}
In this phase, each input port proposes
to {\it exactly one} output port -- decided by the QPS strategy
-- unless it has no packet to transmit.
Here we will only describe the operations
at input port $1$;  that at any other
input port is identical.
Like in~\cite{Gong2017QPS}, we denote by $m_1,m_2,\cdots,m_N$
the respective queue lengths of the $N$ VOQs at input port $1$,
and by $m$
their total ({\it i.e.}, $m \!\triangleq\!\sum_{k=1}^N m_k$).
Input port 1 simply samples an output
port $j$ with probability $\frac{m_j}{m}$,
{\it i.e.,} proportional to $m_j$, the length of the corresponding VOQ (hence the name QPS);
it then proposes to output port $j$, with the value $m_j$ that will be used in the next
phase.
The computational complexity of this QPS operation, carried out using a simple data structure proposed in~\cite{Gong2017QPS},
is $O(1)$ per (input) port.   



\smallskip
\noindent
{Accepting Phase.}
We describe only the action of output port $1$ in the accepting phase;  
that of any other output port is identical.
The action of output port $1$
depends on the number of proposals it receives.
If it receives exactly one proposal
from an input port, it will accept the proposal
and match with the input port. However, if it receives proposals
from multiple input ports, it will accept the proposal accompanied with the
largest VOQ length (called the ``longest VOQ first" accepting strategy), with ties broken uniformly at random.
The computational complexity of this accepting strategy is $O(1)$ on average and can be made $O(1)$ even in the worst case~\cite{Gong2017QPS}.

\subsection{The QPS-r Scheme}\label{subsec:qps-r}

The QPS-r scheme simply runs $r$ QPS iterations.
In each iteration, each  input port that is not matched yet,  first proposes to an output port according to the QPS proposing strategy;
each output port that is not matched yet, accepts a proposal (if it has received any) according to the ``longest VOQ first'' accepting strategy.
Hence, if an input port has to propose multiple times (once in each iteration), due to all its proposals (except perhaps the last) being rejected,
the identities of the output ports it ``samples'' ({\it i.e.,} proposes to) during these iterations are samples with replacement, which more precisely are {\it i.i.d.} random variables with a queue-proportional distribution.


At the first glance, sampling with replacement may appear to be an obviously suboptimal strategy for the following reason.
There is a nonzero probability for
an input port to propose to the same output port multiple times, but since the first (rejected) proposal implies this output port has already accepted ``someone else'' (a proposal from another input port),
all subsequent proposals to this output port will surely go to waste.
For this reason, sampling without replacement ({\it i.e.}, avoiding all output ports proposed to before) may sound like an obviously better strategy.  However, it is really not, since compared to sampling with replacement,
it has a much higher computational complexity of $O(\log N)$, but improves the throughput and delay performances only slightly according to our simulation studies.

\section{Throughput and Delay Analysis under {\it i.i.d.} Arrivals}\label{sec:throughput-and-delay-analysis}

In this section, we show that QPS-1 ({\it i.e.}, running a single QPS iteration) 
delivers exactly the same provable 
throughput and delay guarantees as maximal matching algorithms under {\it i.i.d.} traffic arrivals. 
When $r\! >\! 1$, QPS-r clearly should have better throughput and 
delay performances than QPS-1, as more input and output ports can be matched up
during subsequent iterations, although we are not able to derive stronger bounds.
 

\subsection{Preliminaries}\label{subsec:background-notation}\label{subsec:traffic-model}


In this section, we introduce the notation and assumptions that will later be used in our derivations.
We define three $N\times N$ matrices $Q(t)$, $A(t)$, and $D(t)$.
Let $Q(t) \triangleq \big{(}q_{ij}(t)\big{)}$ be the queue length matrix where each $q_{ij}(t)$
is the length of
the $j^{th}$ VOQ at input port $i$
during time slot $t$.
With a slight abuse of notation, we refer to this VOQ as $q_{ij}$ (without the $t$ term).

\def\wrapwidth{0.55\textwidth}
\begin{wrapfigure}{R}{\wrapwidth}
\centering
\includegraphics[draft=false,width=\wrapwidth]{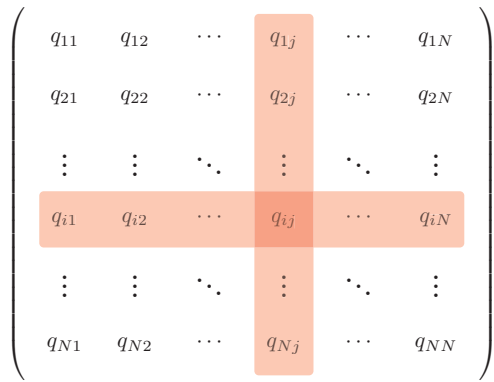}
\caption{$Q^\dagger_{ij}$: neighborhood of $q_{ij}$.}
\label{fig:qd}
\end{wrapfigure}

We define $Q_{i*}(t)$ and $Q_{*j}(t)$ as the sum of the $i^{th}$ row and
the sum of the $j^{th}$
column respectively of $Q(t)$, {\it i.e.}, $Q_{i*}(t)\triangleq\sum_j q_{ij}(t)$ and 
$Q_{*j}(t)\triangleq\sum_i q_{ij}(t)$.
With a similar abuse of notation, we define $Q_{i*}$ as the VOQ set $\{q_{i1}, q_{i2}, \cdots, q_{iN}\}$ ({\it i.e.}, those on
the $i^{th}$ row), and $Q_{*j}$ as $\{q_{1j}, q_{2j}, \cdots, q_{Nj}\}$ ({\it i.e.}, those on the $j^{th}$ column).

Now we introduce a concept that lies at the heart of our derivations: neighborhood.
For each VOQ $q_{ij}$, we define its neighborhood as $Q_{i*}\bigcup Q_{*j}$, the set of VOQs on the
$i^{th}$ row or the $j^{th}$ column.  We denote this neighborhood as $Q^\dagger_{ij}$, since it has the shape of a cross. 
{\color{red}With a slight abuse of notation, we use $Q^\dagger_{ij}$ to also denote the set of the 
corresponding input-output port pairs $\{(l,w):q_{lw}\in Q^\dagger_{ij}\}$.}
\autoref{fig:qd} illustrates $Q^\dagger_{ij}$, where the row and column in the shadow are the VOQ sets $Q_{i*}$ 
and $Q_{*j}$ respectively.  $Q^\dagger_{ij}$ can be viewed as the {\it interference set}   
of VOQs for VOQ $q_{ij}$~\cite{Neely2008DelayMM,Neely2009MMDelay}, as no other VOQ in $Q^\dagger_{ij}$ can be 
active ({\it i.e.,} transmit packets) 
simultaneously with $q_{ij}$. 
We define $Q^\dagger_{ij}(t)$ as the total length of all VOQs in (the set) $Q^\dagger_{ij}$ at time slot $t$, that is
\begin{equation}\label{eq:dagger-def}
Q^\dagger_{ij}(t)\triangleq Q_{i*}(t) - q_{ij}(t) + Q_{*j}(t).
\end{equation}
Here we need to subtract the term $q_{ij}(t)$ so that it is not double-counted (in both $Q_{i*}(t)$ and $Q_{*j}(t))$.

Let $A(t) \!=\! \big{(}a_{ij}(t)\big{)}$ be the traffic arrival matrix where $a_{ij}(t)$
is the number of packets arriving at the input port $i$ destined for
output port $j$ during time slot $t$.  
Here, we assume that, for each $1\!\le\! i, j\!\le\! N$, $\{a_{ij}(t)\}_{t=0}^{\infty}$ is
a sequence of {\it i.i.d.} random variables,
and this sequence is independent of other sequences (for a different $i$ and/or $j$). 
{\color{red}Like in~\cite{mou2020heavy}, we further assume that $a_{ij}(t)$ is upper-bounded by $a_{max}$ for any $i,j$ at any time slot $t$. 
Thus, the second moment
of their common distribution ($=\mathbf{E}\big{[}a^2_{ij}(0)\big{]}$) is finite.} 
Our analysis, however, holds for more general arrival processes ({\it e.g.}, Markovian arrivals) that were 
considered in~\cite{Neely2008DelayMM,Neely2009MMDelay}, as we will elaborate shortly.
Let $D(t) = \big{(}d_{ij}(t)\big{)}$ be the
departure matrix for time slot $t$ output by the 
switching 
algorithm.
Similar to $S$, $D(t)$ is a $0$-$1$ matrix in which
$d_{ij}(t)\!=\!1$ if and only if a packet departs
from $q_{ij}$ during time slot $t$.
For any $i, j$, the queue length process
$q_{ij}(t)$ evolves as follows:
\begin{equation}\label{eq:queueing-dynamics-entrywise}
q_{ij}(t + 1) = q_{ij}(t) - d_{ij}(t) + a_{ij}(t).
\end{equation}

Let $\Lambda = \big{(}\lambda_{ij}\big{)}$ be the (normalized) traffic rate matrix (associated with $A(t)$) where
$\lambda_{ij}$ is normalized (to the percentage of the line rate of an input/output link) mean arrival rate of packets to VOQ $q_{ij}$.  With $a_{ij}(t)$ being an {\it i.i.d.} process, we have $\lambda_{ij} =\mathbf{E}\big{[}a_{ij}(0)\big{]}$.
We define $\rho_\Lambda$ as the maximum load factor imposed on any input or output port by $\Lambda$,
\begin{align}\label{def:lambda_max}
\rho_\Lambda \triangleq \max \big{\{} \max_{1\le i\le N}\{\sum_j \lambda_{ij}\}, \max_{1\le j\le N}\{\sum_i \lambda_{ij}\} \big{\}}
\end{align}
A switching algorithm is said to achieve 100\% throughput or be throughput-optimal if the (packet) queues are stable whenever $\rho_\Lambda < 1$. 

As mentioned before, we will prove 
in this section that, same as the maximal matching algorithms,
QPS-1 is stable {\it under any traffic arrival process $A(t)$}, defined above, whose rate matrix $\Lambda$ satisfies 
$\rho_\Lambda \!<\! 1/2$ ({\it i.e.}, can provably attain at least $50\%$ throughput, or half of the maximum). 
We also derive the average delay bound for QPS-1, which we show is 
{\it order-optimal} ({\it i.e.}, independent of switch size $N$). 
In the sequel, we drop the subscript term from $\rho_\Lambda$ and simply denote it as $\rho$.


Similar to $Q^\dagger_{ij}(t)$, we define $A^\dagger_{ij}(t)$ as the total number of packet arrivals to all VOQs in the neighborhood set $Q^\dagger_{ij}$:
\begin{equation}\label{eq:dagger-def-A}
A^\dagger_{ij}(t)\triangleq A_{i*}(t) - a_{ij}(t) + A_{*j}(t),
\end{equation}
where $A_{i*}(t)$ and $A_{*j}(t)$ are similarly defined as $Q_{i*}(t)$ and $Q_{*j}(t)$ respectively.
$D^\dagger_{ij}(t)$, $D_{i*}(t)$, and $D_{*j}(t)$
are similarly defined, so is $\Lambda^\dagger_{ij}(t)$.
We now state some simple facts concerning $D(t)$, $A(t)$, and $\Lambda$ as follows.
\begin{fact}\label{fact:departure-general}
Given any 
switching 
algorithm, for any $i,j$, we have, $D_{i*}(t)\le 1$ 
(at most one packet can depart from input port $i$ during time slot $t$), $D_{*j}(t)\le 1$, and $D^\dagger_{ij}(t)\le 2$. 

\end{fact}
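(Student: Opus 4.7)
The plan is to observe that the fact is an immediate consequence of the matching (sub-permutation) constraint that every switching algorithm must respect, combined with the definition of $D^\dagger_{ij}(t)$ given in the same spirit as equation~\eqref{eq:dagger-def}. No probabilistic or algorithm-specific reasoning is required; the three inequalities are structural properties of any $0$-$1$ matrix $D(t)$ whose support is a matching in the $N\times N$ bipartite graph.

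First I would argue $D_{i*}(t)\le 1$. Because $D(t)$ is a valid schedule, input port $i$ is matched to at most one output port in time slot $t$, so at most one entry in the $i^{th}$ row of $D(t)$ equals $1$ and the rest are $0$; summing over $j$ gives $D_{i*}(t)=\sum_j d_{ij}(t)\le 1$. The bound $D_{*j}(t)\le 1$ follows by the symmetric argument applied to column $j$.

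For the third inequality, I would plug the two row/column bounds into the definition
\[
D^\dagger_{ij}(t)\triangleq D_{i*}(t)-d_{ij}(t)+D_{*j}(t),
\]
which is the direct analogue of~\eqref{eq:dagger-def} for the departure matrix. Since $d_{ij}(t)\ge 0$, this yields $D^\dagger_{ij}(t)\le D_{i*}(t)+D_{*j}(t)\le 1+1=2$, as claimed. (In fact one can sharpen this to $D^\dagger_{ij}(t)\le 2-d_{ij}(t)$, but the stated bound $\le 2$ suffices for later use.)

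There is no real obstacle here; the only thing to be careful about is making explicit that ``any switching algorithm'' in the statement is implicitly restricted to algorithms that output a valid schedule, i.e., a sub-permutation matrix as defined in~\autoref{subsec:sys-module}. Once that is noted, the three bounds are one-line consequences of the matching constraint and the defining equation of $D^\dagger_{ij}(t)$.
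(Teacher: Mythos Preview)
Your proposal is correct and matches the paper's treatment: the paper states this fact without proof, treating the three inequalities as immediate consequences of $D(t)$ being a sub-permutation matrix, which is exactly the argument you spell out. There is nothing more to add.
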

\begin{fact}\label{fact:arrival-general}
Given any {\it i.i.d.} arrival process $A(t)$ and its rate matrix is $\Lambda$ whose maximum load factor is defined in~\autoref{def:lambda_max}, for any 
$i, j$, we have $\mathbf{E}[A^\dagger_{ij}(t)] = \Lambda^\dagger_{ij} \!\leq\!2\rho$.
\end{fact}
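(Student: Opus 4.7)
The plan is to verify both the equality $\mathbf{E}[A^\dagger_{ij}(t)] = \Lambda^\dagger_{ij}$ and the bound $\Lambda^\dagger_{ij} \le 2\rho$ by direct calculation from the definitions; both follow essentially from linearity of expectation together with the defining inequality of $\rho$. There is no serious obstacle here — this is a routine bookkeeping fact that sets up notation for the more substantive Lyapunov-drift calculations later.

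First I would expand $A^\dagger_{ij}(t)$ using its definition in~\eqref{eq:dagger-def-A} and apply linearity of expectation to write
\[
\mathbf{E}[A^\dagger_{ij}(t)] \;=\; \mathbf{E}[A_{i*}(t)] \;-\; \mathbf{E}[a_{ij}(t)] \;+\; \mathbf{E}[A_{*j}(t)].
\]
Since $A_{i*}(t) = \sum_k a_{ik}(t)$ and each $a_{ik}(t)$ is {\it i.i.d.}\ in $t$ with mean $\lambda_{ik}$, we have $\mathbf{E}[A_{i*}(t)] = \sum_k \lambda_{ik} = \Lambda_{i*}$, and analogously $\mathbf{E}[A_{*j}(t)] = \Lambda_{*j}$ and $\mathbf{E}[a_{ij}(t)] = \lambda_{ij}$. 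Substituting these yields $\mathbf{E}[A^\dagger_{ij}(t)] = \Lambda_{i*} - \lambda_{ij} + \Lambda_{*j}$, which is precisely the definition of $\Lambda^\dagger_{ij}$ (the analogue of~\eqref{eq:dagger-def} for $\Lambda$).

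For the inequality, I would observe that $\lambda_{ij} \ge 0$, so dropping this nonnegative term can only increase the right-hand side:
\[
\Lambda^\dagger_{ij} \;=\; \Lambda_{i*} - \lambda_{ij} + \Lambda_{*j} \;\le\; \Lambda_{i*} + \Lambda_{*j}.
\]
By the definition of $\rho$ in~\eqref{def:lambda_max}, both $\Lambda_{i*} = \sum_k \lambda_{ik} \le \rho$ (as the $i$-th row sum) and $\Lambda_{*j} = \sum_k \lambda_{kj} \le \rho$ (as the $j$-th column sum). Combining these two bounds gives $\Lambda^\dagger_{ij} \le 2\rho$, completing the proof.
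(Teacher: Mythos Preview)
Your proof is correct and is exactly the routine verification one would expect: linearity of expectation gives the equality, and dropping the nonnegative $\lambda_{ij}$ term together with the row- and column-sum bounds from~\eqref{def:lambda_max} gives $\Lambda^\dagger_{ij}\le 2\rho$. The paper does not actually supply a proof for this fact --- it is stated as self-evident --- so there is nothing to compare against; your write-up simply makes the implicit reasoning explicit.
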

The following fact is slightly less obvious. 
\begin{fact}\label{fact:departure-general-2}
Given any 
switching  
algorithm, for any $i,j$, we have
\begin{equation}\label{eq:d-and-D-dagger}
d_{ij}(t)D^\dagger_{ij}(t)= d_{ij}(t).
\end{equation}
\end{fact}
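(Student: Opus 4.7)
The plan is to do a simple case analysis on the binary value of $d_{ij}(t)$, using only the matching (sub-permutation) constraint on $D(t)$ noted in \autoref{fact:departure-general} and the definition~\eqref{eq:dagger-def-A}-style expansion $D^\dagger_{ij}(t) = D_{i*}(t) - d_{ij}(t) + D_{*j}(t)$.

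First I would observe that $d_{ij}(t) \in \{0,1\}$ because $D(t)$ is a $0$-$1$ sub-permutation matrix (every entry represents whether the edge $(i,j)$ is in the matching, so at most one packet can depart from VOQ $q_{ij}$ in a single time slot). If $d_{ij}(t) = 0$, both sides of~\eqref{eq:d-and-D-dagger} are trivially $0$ and the identity holds vacuously.

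The only nontrivial case is $d_{ij}(t) = 1$. In this case input port $i$ is matched to output port $j$ in the schedule $D(t)$. By the matching (sub-permutation) constraint, input port $i$ cannot also be matched to any other output port, and output port $j$ cannot also be matched to any other input port. Hence $d_{ik}(t) = 0$ for all $k \neq j$ and $d_{\ell j}(t) = 0$ for all $\ell \neq i$. Summing, this gives $D_{i*}(t) = d_{ij}(t) = 1$ and $D_{*j}(t) = d_{ij}(t) = 1$, so
\begin{equation*}
D^\dagger_{ij}(t) \;=\; D_{i*}(t) - d_{ij}(t) + D_{*j}(t) \;=\; 1 - 1 + 1 \;=\; 1,
\end{equation*}
and therefore $d_{ij}(t) D^\dagger_{ij}(t) = 1 \cdot 1 = 1 = d_{ij}(t)$, as claimed.

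There is really no obstacle here; the statement is essentially a restatement of the matching constraint together with the cross-shape definition of the neighborhood. The only thing to be careful about is the $-d_{ij}(t)$ correction term in the definition of $D^\dagger_{ij}$: without it one would get $D_{i*}(t) + D_{*j}(t) = 2$ when $d_{ij}(t)=1$, which is exactly what the correction removes so as to avoid double-counting the edge $(i,j)$, yielding the clean identity $d_{ij}(t)D^\dagger_{ij}(t) = d_{ij}(t)$ that will be used later to bound the drift contribution from the departure term.
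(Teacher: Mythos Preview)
Your proof is correct and follows exactly the same approach as the paper: a two-case analysis on $d_{ij}(t)\in\{0,1\}$, with the nontrivial case $d_{ij}(t)=1$ handled via the sub-permutation constraint to conclude $D_{i*}(t)=D_{*j}(t)=1$ and hence $D^\dagger_{ij}(t)=1-1+1=1$. The paper's justification is essentially identical, phrased as ``no other VOQ in $Q^\dagger_{ij}$ can be active simultaneously with $q_{ij}$.''
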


\autoref{fact:departure-general-2} holds because, as mentioned earlier, 
no other VOQ in $Q^\dagger_{ij}$ (see~\autoref{fig:qd}) can be active simultaneously with $q_{ij}$. 
More precisely, if $d_{ij}(t)\!=\!1$ ({\it i.e.}, 
VOQ $q_{ij}$ is active during time slot $t$) then 
$D^\dagger_{ij}(t)\!\triangleq\!D_{i*}(t)\!-\!d_{ij}(t)\!+\!D_{*j}(t)\!=\!1\!-\!1\!+\!1\!=\!1$; 
otherwise $d_{ij}(t)D^\dagger_{ij}(t)\!=\!0\cdot D^\dagger_{ij}(t)\!=\!0\!=\!d_{ij}(t)$.  





\subsection{Why QPS-1 Is Just as Good?}\label{subsec:qps-r-property}

The provable throughput and delay bounds of maximal matching algorithms were derived from a ``departure inequality'' (to be stated and proved next) 
that all maximal matchings satisfy.  This inequality, however, is not in general satisfied by matchings generated by QPS-1.   
Rather, QPS-1 satisfies a much weaker form of departure inequality 
(\autoref{lemma:QPS-property}).  
Fortunately, this much weaker condition is sufficient
for proving the same throughput bound and delay bounds, as will be proved in \autoref{thm:stability-tool} and~\autoref{thm:delay-tool} respectively.


\begin{property}[Departure Inequality, stated as Lemma 1 in~\cite{Neely2009MMDelay,Neely2008DelayMM}]\label{lemma:deterministic-depature-inequality}
If during a time slot $t$, the crossbar schedule is a {\it maximal matching}, then each departure process 
$D^{\dagger}_{ij}(t)$ satisfies the following inequality  
\begin{equation}\label{eq:QPS-property-strong}
q_{ij}(t)D^{\dagger}_{ij}(t)\!\geq\! q_{ij}(t).
\end{equation}
\end{property}
\begin{proof}
We reproduce the proof of~\autoref{lemma:deterministic-depature-inequality} with a slightly different approach for 
this paper to be self-contained. 
Suppose the contrary is true, {\it i.e.}, $q_{ij}(t)D^{\dagger}_{ij}(t)\!<\! q_{ij}(t)$.  
This can only happen when 
$q_{ij}(t)\!>\!0$ and $D^{\dagger}_{ij}(t)\!=\!0$. However, $D^{\dagger}_{ij}(t)\!=\!0$ implies that no nonempty VOQ (edge) 
in the neighborhood $Q^{\dagger}_{ij}$ (see \autoref{fig:qd}) is a part of the matching. 
Then this matching cannot be maximal (a contradiction)
since it can be enlarged by the addition of the nonempty VOQ (edge) $q_{ij}$.  
\end{proof}

Clearly, the departure inequality \autoref{eq:QPS-property-strong} above implies the following 
much weaker form of it:
{\color{red}
\begin{equation}\label{eq:QPS-property}
\sum_{i,j}\mathbf{E}\big{[}q_{ij}(t)D^{\dagger}_{ij}(t) \mid Q(t)\big{]} \ge \sum_{i,j}q_{ij}(t)
\end{equation}}
In the rest of this section, we prove the following lemma:
\begin{lemma}\label{lemma:QPS-property}
The matching generated by QPS-1, during any time slot $t$, satisfies the much weaker ``departure inequality''~\autoref{eq:QPS-property}.
\end{lemma}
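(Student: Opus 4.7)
The plan is to compute $\mathbf{E}\big[\sum_{i,j} q_{ij}(t)\,D^\dagger_{ij}(t) \mid Q(t)\big]$ in closed form and verify \autoref{eq:QPS-property} term by term. Write $\sigma(i)$ for the output port sampled by input port $i$; $p_{ij} \triangleq q_{ij}(t)/Q_{i*}(t)$ (with the convention $p_{ij} \triangleq 0$ when $Q_{i*}(t)=0$) for the corresponding proposal probability; $\pi_j \triangleq 1 - \prod_{i'}(1-p_{i'j})$ for the probability that output $j$ receives at least one proposal, a product form that is valid because the samplings $\sigma(i)$ are independent across $i$; and $\beta_{ij} \triangleq \mathbf{P}[i \text{ accepted at } j \mid \sigma(i)=j,\, Q(t)]$ for the conditional acceptance probability.

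First I would expand $D^\dagger_{ij}(t) = D_{i*}(t) + D_{*j}(t) - d_{ij}(t)$, take expectations, and use the elementary identities $\mathbf{E}[d_{ij}(t)\mid Q(t)] = p_{ij}\beta_{ij}$ and $\mathbf{E}[D_{i*}(t)\mid Q(t)] = \sum_{j}p_{ij}\beta_{ij}$. A few lines of algebra (grouping row sums via $\sum_j p_{ij}\beta_{ij} Q_{i*}(t) = \sum_j q_{ij}\beta_{ij}$, and column sums via $\sum_j \pi_j Q_{*j} = \sum_{i,j} q_{ij} \pi_j$) then yield the clean identity
\[
\mathbf{E}\Big[\sum_{i,j} q_{ij}(t)\,D^\dagger_{ij}(t) \,\Big|\, Q(t)\Big] = \sum_{i,j} q_{ij}(t)\,\big[\beta_{ij}(1-p_{ij}) + \pi_j\big].
\]
It therefore suffices to establish the pointwise inequality $\beta_{ij}(1-p_{ij}) + \pi_j \geq 1$ for every $(i,j)$ with $q_{ij}(t)>0$. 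Substituting $1-\pi_j = (1-p_{ij})\prod_{i'\neq i}(1-p_{i'j})$ and cancelling the factor $(1-p_{ij})$ (strictly positive unless $p_{ij}=1$, in which case both sides vanish) reduces the claim to
\[
\beta_{ij} \;\geq\; \prod_{i' \neq i}(1 - p_{i'j}).
\]

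I expect this last step to be the sole nontrivial one, and it is where the structure of QPS-1 actually enters. By the independence of the samplings across input ports, the right-hand side is exactly the probability, conditional on $\sigma(i)=j$, that no input port other than $i$ proposes to output $j$. Under the ``longest-VOQ-first'' rule of QPS-1---in fact under \emph{any} work-conserving accepting rule that accepts some proposal whenever one is received---if $i$ is the unique proposer at $j$ then $i$ is necessarily accepted, and so $\beta_{ij}$ is at least this probability. Multiplying the resulting pointwise bound by $q_{ij}(t)$ and summing over all $(i,j)$ yields \autoref{eq:QPS-property}. Degenerate rows with $Q_{i*}(t)=0$ contribute zero on both sides and need no special treatment, and the argument is insensitive to the tie-breaking convention because it invokes only work-conservation---exactly matching the remark earlier in the paper that the same guarantees hold for any work-conserving accepting strategy.
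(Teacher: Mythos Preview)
Your proposal is correct and follows essentially the same route as the paper's proof: both expand $D^\dagger_{ij}$ into its three parts, use the identities $\mathbf{E}[d_{ij}\mid Q(t)]=p_{ij}\beta_{ij}$, $\mathbf{E}[D_{i*}\mid Q(t)]=\sum_j p_{ij}\beta_{ij}$, $\mathbf{E}[D_{*j}\mid Q(t)]=\pi_j$, and then reduce everything to the single key inequality $\beta_{ij}\ge\prod_{i'\neq i}(1-p_{i'j})$, justified by work-conservation of the accepting rule. Your presentation is marginally cleaner in that you package the conclusion as the pointwise bound $\beta_{ij}(1-p_{ij})+\pi_j\ge 1$ before summing, whereas the paper keeps the $\pi_j$ term separate and adds it back at the very end, but the underlying computation and the decisive step are identical.
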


Before we prove \autoref{lemma:QPS-property}, we introduce an important definition and 
state four facts about QPS-1 that will be used later in the proof.   In the following, 
we will run into several innocuous possible $\frac{0}{0}$ 
situations that all result from queue-proportional sampling, and we consider all of them to be $0$.  

We define $\alpha_{ij}(t)$ as the probability of the event that the proposal from
input port $i$ to output port $j$ is accepted during the accepting phase, conditioned upon the
event that input port $i$ did propose to output port $j$ during the proposing phase.  With this 
definition, we have the first fact
\begin{equation}\label{eq:departure-row-org1}
\mathbf{E}\big{[}d_{ij}(t) \mid Q(t)\big{]}=\frac{q_{ij}(t)}{Q_{i*}(t)}\cdot \alpha_{ij}(t),
\end{equation}
since both sides (note $d_{ij}(t)$ is a 0-1 {\it r.v.}) are the probability that $i$ proposes to $j$ and this 
proposal is accepted.  
Summing over $j$ on both sides, 
we obtain the second fact
\begin{equation}\label{eq:departure-row-org2}
\mathbf{E}\big{[}D_{i*}(t) \mid Q(t)\big{]}=\sum_{j}\frac{q_{ij}(t)}{Q_{i*}(t)}\cdot \alpha_{ij}(t).
\end{equation}


The third fact is that, for any output port $j$,
\begin{equation}\label{eq:departure-column}
\mathbf{E}\big{[}D_{*j}(t) \mid Q(t)\big{]}=1 - \prod_{i}\big{(}1 - \frac{q_{ij}(t)}{Q_{i*}(t)}\big{)}.
\end{equation}
In this equation, the LHS is the conditional {\it probability} ($D_{*j}(t)$ is also a 0-1 {\it r.v.}) that
at least one proposal is received and accepted by output port $j$, and the 
second term on the RHS of~\autoref{eq:departure-column} is the probability that no input port
proposes to output port $j$ (so $j$ receives no proposal).   This equation holds since when $j$ receives one or more proposals, it 
will accept one  
of them (the one with the longest VOQ).




The fourth fact is that, for any $i,j$,
\begin{equation}\label{eq:no-competition}
\alpha_{ij}(t)\geq \prod_{k\neq i}\big{(}1 - \frac{q_{kj}(t)}{Q_{k*}(t)}\big{)}.
\end{equation}
This inequality holds because when input port $i$ proposes to output port $j$, and no other input port does,
$j$ has no choice but to accept $i's$ proposal.  

\subsection{Proof of~\autoref{lemma:QPS-property}}\label{subsubsec:proof-qps-depature}
Now we are ready to prove~\autoref{lemma:QPS-property}.

By the definition of $D^{\dagger}_{ij}(t)\!\triangleq\! D_{i*}(t)\!-\!d_{ij}(t)\!+\!D_{*j}(t)$,
we have,
\begin{align}
\sum_{i,j}\mathbf{E}\big{[}q_{ij}(t)D^{\dagger}_{ij}(t) \mid Q(t)\big{]}
=&{\textstyle\sum\limits_{i,j}}q_{ij}(t)\mathbf{E}\big{[}D_{i*}(t)\mid Q(t)\big{]} \!-\!{\textstyle\sum\limits_{i,j}} q_{ij}(t) \mathbf{E}\big{[}d_{ij}(t)\mid Q(t)\big{]} \nonumber \\
&\qquad {} \!+\! \sum_{i,j} q_{ij}(t)\mathbf{E}\big{[}D_{*j}(t) \mid Q(t)\big{]}.
\label{eq:qps-r-prop-expand}
\end{align}

Focusing on the first term on the RHS of~\autoref{eq:qps-r-prop-expand} and using~\autoref{eq:departure-row-org2}, we have,
\begin{align}
&\sum_{i,j}q_{ij}(t)\mathbf{E}\big{[}D_{i*}(t)\mid Q(t)\big{]}\notag \\
=&{}\sum_{i}Q_{i*}(t)\mathbf{E}\big{[}D_{i*}(t)\mid Q(t)\big{]}\nonumber\\
=&\sum_{i}Q_{i*}(t)\Big{(}\sum_{j}\frac{q_{ij}(t)}{Q_{i*}(t)}\cdot \alpha_{ij}(t)\Big{)} \notag \\
=&\sum_{i,j}q_{ij}(t)\alpha_{ij}(t). \label{eq:qps-r-prop-term-1}
\end{align}

Focusing the second term on the RHS of~\autoref{eq:qps-r-prop-expand} and using
~\autoref{eq:departure-row-org1}, we have
\begin{eqnarray}\label{eq:qps-r-prop-term-3}
-\sum_{i,j} q_{ij}(t) \mathbf{E}\big{[}d_{ij}(t)\mid Q(t)\big{]}\!=\!-\sum_{i,j}q_{ij}(t)\alpha_{ij}(t)\frac{q_{ij}(t)}{Q_{i*}(t)}.
\end{eqnarray}


Hence, the sum of the first two terms in~\autoref{eq:qps-r-prop-expand} is equal to 
\begin{align}
&\sum_{i,j}q_{ij}(t)\alpha_{ij}(t)\Big{(}1\!-\!\frac{q_{ij}(t)}{Q_{i*}(t)}\Big{)}\nonumber\\
\geq&\sum_{i,j}q_{ij}(t)\Big{(}\prod_{k\neq i}\big{(}1 - \frac{q_{kj}(t)}{Q_{k*}(t)}\big{)}\Big{)}\big{(}1 - \frac{q_{ij}(t)}{Q_{i*}(t)}\big{)}\label{eq:cond-depart-explain1}\\
=&\sum_{i,j}q_{ij}(t)\prod_{i}\big{(}1 \!-\! \frac{q_{ij}(t)}{Q_{i*}(t)}\big{)}\nonumber\\
=&\sum_{i,j}q_{ij}(t)\Big{(}1-\mathbf{E}\big{[}D_{*j}(t) \mid Q(t)\big{]}\Big{)}. \label{eq:qps-r-merge-1-2}
\end{align}
Note that \autoref{eq:cond-depart-explain1} is due to \autoref{eq:no-competition} and \autoref{eq:qps-r-merge-1-2} is 
due to \autoref{eq:departure-column}.  We now arrive at~\autoref{eq:QPS-property},
when adding the third and last term in~\autoref{eq:qps-r-prop-expand} to the RHS of~\autoref{eq:qps-r-merge-1-2}.

\subsection{Throughput Analysis}\label{subsec:throughput-analysis}
In this section we prove, through Lyapunov stability analysis, 
the following theorem 
({\it i.e.,} \autoref{thm:stability-tool})
which states that any switching algorithm that satisfies the weaker departure inequality~\autoref{eq:QPS-property}, including QPS-1 
as shown in~\autoref{lemma:QPS-property}, can 
attain at least $50\%$ throughput under {\it i.i.d.} arrivals. 
The same throughput bound was proved in~\cite{Dai00Speedup}, through fluid limit analysis, for maximal matching algorithms 
using the (stronger) departure inequality~\autoref{eq:QPS-property-strong}
which as stated earlier is not in general satisfied by matchings generated by QPS-1. 


\begin{theorem}\label{thm:stability-tool}
Let $\{Q(t)\}_{t=0}^{\infty}$ be the queueing process of a switching system that is an irreducible Markov chain.
Let the departure process of $\{Q(t)\}_{t=0}^{\infty}$ satisfy the weaker ``departure inequality''~\autoref{eq:QPS-property}.
Then whenever its maximum load factor $\rho<1/2$, 
the queueing process is stable in the following sense:  (I)  
The Markov chain $\{Q(t)\}_{t=0}^{\infty}$ is positive recurrent
and hence converges to a stationary distribution $\bar{Q}$;  
(II) The first moment of $\bar{Q}$ is finite.  
\end{theorem}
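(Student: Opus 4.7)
My plan is to deploy a Foster--Lyapunov stability argument with a quadratic Lyapunov function chosen so that its expected one-step drift naturally couples to the weighted departure sum $\sum_{i,j} q_{ij}(t) D^{\dagger}_{ij}(t)$ featured in the hypothesis~\autoref{eq:QPS-property}. The candidate is
\[
V(Q) \triangleq \tfrac{1}{2}\sum_i Q_{i*}^2 + \tfrac{1}{2}\sum_j Q_{*j}^2,
\]
where the row- and column-aggregated queues make the ``neighborhood'' departures $D^{\dagger}_{ij}$ appear after summation by parts. A more naive choice such as $\sum_{i,j} q_{ij}^2$ would instead produce $q_{ij} d_{ij}$ cross-terms, which~\autoref{eq:QPS-property} alone cannot control.

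First I would expand $\mathbf{E}[V(Q(t+1)) - V(Q(t)) \mid Q(t)]$ using the row dynamics $Q_{i*}(t+1) = Q_{i*}(t) + A_{i*}(t) - D_{i*}(t)$ and its column analogue. The drift splits into a linear cross-term in $Q_{i*}(t)$ and $Q_{*j}(t)$, plus a quadratic residual $\mathbf{E}[(A_{i*}-D_{i*})^2 + (A_{*j}-D_{*j})^2 \mid Q(t)]$. This residual is uniformly bounded by some constant $K = K(N,a_{max})$ independent of $Q(t)$, since $a_{ij}(t) \leq a_{max}$ and $D_{i*}(t), D_{*j}(t) \leq 1$ by~\autoref{fact:departure-general}.

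Next, the summation-swap identity
\[
\sum_i Q_{i*}(t) D_{i*}(t) + \sum_j Q_{*j}(t) D_{*j}(t) = \sum_{i,j} q_{ij}(t)\bigl(D^{\dagger}_{ij}(t) + d_{ij}(t)\bigr),
\]
which follows from $D^{\dagger}_{ij} \triangleq D_{i*} - d_{ij} + D_{*j}$, lets me take conditional expectations and apply the weaker departure inequality~\autoref{eq:QPS-property}. Discarding the nonnegative $q_{ij}\mathbf{E}[d_{ij}\mid Q(t)]$ contribution then lower-bounds the expected departure term by $\sum_{i,j} q_{ij}(t)$. On the arrival side, $\sum_i Q_{i*}(t)\Lambda_{i*} + \sum_j Q_{*j}(t)\Lambda_{*j} \leq 2\rho \sum_{i,j} q_{ij}(t)$ by the definition of $\rho$ in~\autoref{def:lambda_max}. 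Combining yields the key drift inequality
\[
\mathbf{E}[V(Q(t+1)) - V(Q(t)) \mid Q(t)] \leq -(1-2\rho) \sum_{i,j} q_{ij}(t) + K,
\]
which is strictly negative outside a finite set whenever $\rho < 1/2$.

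Part~(I) then follows from Foster's theorem applied to the irreducible chain $\{Q(t)\}$. For part~(II), I would telescope the drift inequality over $t = 0, 1, \ldots, T-1$, use $V \geq 0$ on the left-hand side after dividing by $T$, and let $T \to \infty$; the resulting Cesaro bound on $\mathbf{E}[\sum_{i,j} q_{ij}(t)]$, combined with positive recurrence and Fatou's lemma, gives the finite first moment of $\bar{Q}$. The main technical obstacle is the very first step: choosing a Lyapunov function whose drift exposes exactly the $D^{\dagger}_{ij}$-weighted sum featured in~\autoref{eq:QPS-property}, rather than the $d_{ij}$-weighted sum that a standard quadratic would produce. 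Everything downstream---including the promised extension to Markovian arrivals, which will replace the single-slot conditional expectations by $T$-slot ones while keeping the same Lyapunov function---is accounting on top of that choice.
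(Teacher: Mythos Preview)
Your proof is correct and close in spirit to the paper's, but the Lyapunov function is not the same. The paper uses $L(Q)=\sum_{i,j} q_{ij}(t)\,Q^\dagger_{ij}(t)$, which expands to $\sum_i Q_{i*}^2+\sum_j Q_{*j}^2-\sum_{i,j} q_{ij}^2=2V(Q)-\sum_{i,j}q_{ij}^2$; i.e., it is your $V$ with the ``diagonal'' quadratic subtracted off. That subtraction is precisely what makes the cross term in the drift come out as $\sum_{i,j} q_{ij}(t)D^\dagger_{ij}(t)$ on the nose, whereas your identity produces the extra $\sum_{i,j} q_{ij}(t)\mathbf{E}[d_{ij}(t)\mid Q(t)]$ piece that you then discard. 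The discard is legitimate (it has the right sign), but it costs you a factor of two in the drift coefficient: you obtain $-(1-2\rho)\|Q(t)\|_1$ where the paper obtains $-2(1-2\rho)\|Q(t)\|_1$. For part~(I) this is immaterial. For the quantitative delay bound in \autoref{thm:delay-tool}, however, the paper's choice yields the sharper constant $\frac{1}{2(1-2\rho)}$ rather than $\frac{1}{1-2\rho}$, which is why they prefer the $Q^\dagger$-weighted Lyapunov function. For part~(II) the paper simply invokes \autoref{thm:delay-tool} (via the moment-bound lemma, \autoref{thm:moment-bound}); your telescoping plus Fatou argument is a standard equivalent route to a first-moment bound, though it would need a short justification that the Ces\`aro limit controls the stationary mean. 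Your remark that the Markovian-arrival extension reuses the same Lyapunov function with multi-step drift is exactly what the paper does in Appendix~B.
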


\begin{proof}
Here we prove only (I), since~\autoref{thm:delay-tool} that we will shortly prove implies (II).
We define the following Lyapunov function of $Q(t)$:
$L\big{(}Q(t)\big{)}=\sum_{i,j}q_{ij}(t)Q^\dagger_{ij}(t)$,
where $Q^\dagger_{ij}(t)$ is defined earlier in~\autoref{eq:dagger-def}.  This Lyapunov function was first introduced in~\cite{Neely2008DelayMM} for
the delay analysis of maximal matching algorithms for wireless networking. 
By the Foster-Lyapunov stability criterion~\cite[Proposition 2.1.1]{brucelecture2006}, 
to prove that $\{Q(t)\}_{t=0}^{\infty}$ is positive recurrent, it suffices to show that, there exists a constant $B \!>\! 0$ such that whenever the total queue (VOQ) length $\|Q(t)\|_1 \!>\! B$
(because it is not hard to verify that the complement set of states $\{Q(t): \|Q(t)\|_1 \!\le\! B\}$ is finite and the drift is bounded whenever $Q(t)$ belongs to this set), we have
\begin{equation}\label{eq:qps-drift}
\mathbf{E}\big{[}L\big{(}Q(t+1)\big{)} - L\big{(}Q(t)\big{)} \mid Q(t)\big{]} \leq -\epsilon,
\end{equation}
where $\epsilon > 0$ is a constant.  
It is not hard to check (for more detailed derivations, please refer to~\cite{Neely2008DelayMM}),
\begin{align}
L\big{(}Q(t+1)\big{)}-L\big{(}Q(t)\big{)} 
=&2\sum_{i,j}q_{ij}(t)\big{(}A^\dagger_{ij}(t)-D^\dagger_{ij}(t)\big{)}  \nonumber\\
&\qquad {} +\sum_{i,j}\big{(}a_{ij}(t)-d_{ij}(t)\big{)}\big{(}A^\dagger_{ij}(t)-D^\dagger_{ij}(t)\big{)}. \label{eq:uncond-drift}
\end{align}

Hence the drift (LHS of~\autoref{eq:qps-drift}) can be written as
\begin{align}
&\mathbf{E}\big{[}L\big{(}Q(t+1)\big{)} - L\big{(}Q(t)\big{)} \mid Q(t)\big{]}\nonumber\\
=&\mathbf{E}\Big{[}2\sum_{i,j}q_{ij}(t)\big{(}A^\dagger_{ij}(t)-D^\dagger_{ij}(t)\big{)}\mid Q(t)\Big{]} \nonumber\\
&\qquad{}+\mathbf{E}\Big{[}\sum_{i,j}\big{(}a_{ij}(t)\!-\!d_{ij}(t)\big{)}\big{(}A^\dagger_{ij}(t)\!-\!D^\dagger_{ij}(t)\big{)}\mid Q(t)\Big{]}. \label{eq:raw-drift}
\end{align}



Now we claim the following two inequalities, which we will prove shortly.
\begin{eqnarray}
\mathbf{E}\big{[}2{\textstyle \sum\limits_{i,j}}q_{ij}(t)\big{(}A^\dagger_{ij}(t)\!-\!D^\dagger_{ij}(t)\big{)}\mid Q(t)\big{]}\!\leq\! 2(2\rho \!-\! 1)\|Q(t)\|_1. \label{eq:term-1-tbp}\\
\mathbf{E}\big{[}{\textstyle \sum\limits_{i,j}}\big{(}a_{ij}(t)\!-\!d_{ij}(t)\big{)}\big{(}A^\dagger_{ij}(t)\!-\!D^\dagger_{ij}(t)\big{)}\mid Q(t)\big{]} \!\leq\! CN^2. \label{eq:term-2-tbp}
\end{eqnarray}

With~\autoref{eq:term-1-tbp} and~\autoref{eq:term-2-tbp} substituted into~\autoref{eq:raw-drift}, we have 
\[
\mathbf{E}\big{[}L\big{(}Q(t+1)\big{)} - L\big{(}Q(t)\big{)} \mid Q(t)\big{]} \!\leq\! 2(2\rho \!-\! 1)\|Q(t)\|_1 + CN^2.
\]  
where $C > 0$ is a constant.
Since $\rho \!<\! 1/2$, we have $2\rho-1\!<\!0$. Hence, there exist $B,\epsilon >0$ such that, whenever $\|Q(t)\|_1 \!>\! B$,
\begin{equation*}
\mathbf{E}\big{[}L\big{(}Q(t+1)\big{)} - L\big{(}Q(t)\big{)} \mid Q(t)\big{]} \leq -\epsilon.
\end{equation*}

Now we proceed to prove~\autoref{eq:term-1-tbp}.
\begin{align}
&\mathbf{E}\Big{[}2\sum_{i,j}q_{ij}(t)\big{(}A^\dagger_{ij}(t)-D^\dagger_{ij}(t)\big{)}\mid Q(t)\Big{]}\nonumber\\
=&2\Big{(}\sum_{i,j}\mathbf{E}\Big{[}q_{ij}(t)A^\dagger_{ij}(t)\mid Q(t) \Big{]}\!-\!\sum_{i,j}\mathbf{E}\Big{[}q_{ij}(t)D^\dagger_{ij}(t)\mid Q(t)\Big{]}\Big{)}\nonumber\\
\leq&2\Big{(}2\rho \sum_{i,j}\mathbf{E}\Big{[}q_{ij}(t) \mid Q(t)\Big{]}\!-\!\sum_{i,j}q_{ij}(t)\Big{)} \label{eq:term-1-explain-01}\\
=&2(2\rho-1)\|Q(t)\|_1. \label{eq:bounded-first-term}
\end{align}
In the above derivations, inequality~\autoref{eq:term-1-explain-01} holds due to~\autoref{eq:QPS-property}, $A(t)$ being independent of $Q(t)$ for any $t$, and 
\autoref{fact:arrival-general} that $\mathbf{E}[A^\dagger_{ij}(t)]\!\leq\!2\rho$.


Now we proceed to prove~\autoref{eq:term-2-tbp}, which upper-bounds the conditional expectation $\mathbf{E}\big{[}\big{(}a_{ij}(t)\!-\!d_{ij}(t)\big{)}\big{(}A^\dagger_{ij}(t)\!-\!D^\dagger_{ij}(t)\big{)} \mid Q(t)\big{]}$.
It suffices however to upper-bound the unconditional expectation $\mathbf{E}\big{[}\big{(}a_{ij}(t)\!-\!d_{ij}(t)\big{)}\big{(}A^\dagger_{ij}(t)\!-\!D^\dagger_{ij}(t)\big{)}\big{]}$, which we will do in the following, since we can obtain the same upper bounds on $\mathbf{E}[D^\dagger_{ij}(t)]$ and $\mathbf{E}[d_{ij}(t)]$ ($2$ and $1$ respectively) 
whether the expectations are conditional (on $Q(t)$) or not.  Note the other two terms $A^\dagger_{ij}(t)$ and $a_{ij}(t)$ are independent of (the condition) $Q(t)$.

As for any $i,j$, $a_{ij}(t)$ is {\it i.i.d.}, we have,
\begin{align}
&\mathbf{E}\big{[}\big{(}a_{ij}(t)-d_{ij}(t)\big{)}\big{(}A^\dagger_{ij}(t)-D^\dagger_{ij}(t)\big{)}\big{]}\label{eq:term-2-uncond}\\
=&\mathbf{E}[a_{ij}(t)A^\dagger_{ij}(t) - d_{ij}(t)A^\dagger_{ij}(t)-a_{ij}(t)D^\dagger_{ij}(t)+d_{ij}(t)D^\dagger_{ij}(t)]\nonumber\\
=&\mathbf{E}[a^2_{ij}(t)]\!-\!\lambda^2_{ij}\!+\!\lambda_{ij}\Lambda^{\dagger}_{ij}
\!-\!\mathbf{E}[d_{ij}(t)]\Lambda^\dagger_{ij} 
-\lambda_{ij}\mathbf{E}[D^\dagger_{ij}(t)]\!+\! \mathbf{E}[d_{ij}(t)D^\dagger_{ij}(t)]\nonumber\\
=&\mathbf{E}[a^2_{ij}(t)]\!-\!\lambda^2_{ij}\!+\!\lambda_{ij}\Lambda^{\dagger}_{ij}
\!-\!\mathbf{E}[d_{ij}(t)]\Lambda^\dagger_{ij} 
-\lambda_{ij}\mathbf{E}[D^\dagger_{ij}(t)]\!+\! \mathbf{E}[d_{ij}(t)]. \label{eq:bt}
\end{align}

In arriving at~\autoref{eq:bt}, we have used~\autoref{eq:d-and-D-dagger}.
The RHS of~\autoref{eq:bt} can be bounded by a constant $C > 0$ due to the following assumptions and facts: 
$\mathbf{E}[a^2_{ij}(t)]\!=\!\mathbf{E}[a^2_{ij}(0)]\!<\!\infty$ for any $t$, 
$d_{ij}(t)\!\leq\! 1$, $D^\dagger_{ij}(t)\!\leq\!2$, $\lambda_{ij}\!\leq\!\rho\!<\!1/2$, and $\Lambda^\dagger_{ij}\!\leq\! 2\rho\!<\!1$.
Therefore, we have (by applying $\sum_{i,j}$ to both~\autoref{eq:term-2-uncond} and the RHS of~\autoref{eq:bt})
\begin{equation*}\label{eq:bounded-term}
\sum_{i,j}\mathbf{E}\Big{[}\big{(}a_{ij}(t)\!-\!d_{ij}(t)\big{)}\big{(}A^\dagger_{ij}(t)\!-\!D^\dagger_{ij}(t)\big{)}\Big{]}\!\leq\! CN^2.
\end{equation*}
\end{proof}

\noindent
{\bf Remarks.}  Now that we have proved that $\{Q(t)\}_{t=0}^{\infty}$ is positive recurrent. 
Hence, we have, in steady state, for any $1\le i,j\le N$, $\mathbf{E}[d_{ij}(t)]\!=\!\lambda_{ij}$.  
%
Therefore, we have, in steady state, for any $1\le i,j\le N$
\begin{align}\label{eq:for-delay-analysis}
\mathbf{E}\big{[}\big{(}a_{ij}(t)\!-\!d_{ij}(t)\big{)}\big{(}A^\dagger_{ij}(t)\!-\!D^\dagger_{ij}(t)\big{)}\big{]}
= \sigma^2_{ij}\!-\!\lambda_{ij}\Lambda^\dagger_{ij}\!+\!\lambda_{ij}.
\end{align}
where $\sigma^2_{ij}\!=\!\mathbf{E}[a^2_{ij}(t)]\!-\!\lambda^2_{ij}$ is the variance of 
$a_{ij}(t)$, 
because \autoref{eq:bt} can be simplified as the RHS of~\autoref{eq:for-delay-analysis} in steady state.



Now, we prove the following corollary of~\autoref{thm:stability-tool}
which, in combination with~\autoref{lemma:QPS-property}, shows that QPS-1 can attain at least 50\% throughput. 
\begin{corollary}\label{coro:stability-qps}
Under an i.i.d. arrival process, whenever the maximum load factor $\rho\!<\!1/2$,
QPS-1 is stable in the following sense:
The resulting queueing process $\{Q(t)\}_{t=0}^{\infty}$
is a positive recurrent Markov chain 
and its stationary distribution $\bar{Q}$ has finite first moment. 
\end{corollary}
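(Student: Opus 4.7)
The plan is to obtain \autoref{coro:stability-qps} as a direct consequence of the two main ingredients already established in this section: \autoref{lemma:QPS-property}, which shows that QPS-1 satisfies the weaker departure inequality~\autoref{eq:QPS-property}, and \autoref{thm:stability-tool}, which turns that inequality into a positive-recurrence conclusion for any switching algorithm with maximum load factor $\rho<1/2$. So the bulk of the work is really just verifying that QPS-1 fits into the hypothesis template of \autoref{thm:stability-tool}.

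First I would observe that under QPS-1 the queue process $\{Q(t)\}_{t=0}^{\infty}$ is indeed a Markov chain. This follows from the queueing recursion~\autoref{eq:queueing-dynamics-entrywise}, because $d_{ij}(t)$ is a (randomized) function of $Q(t)$ alone — the QPS proposing probabilities $q_{ij}(t)/Q_{i*}(t)$ and the ``longest VOQ first'' accepting rule depend only on the current VOQ lengths — and $a_{ij}(t)$ is, by the i.i.d.\ assumption, independent of the history up to time $t$. Next I would verify irreducibility: using the assumption that the per-slot arrival $a_{ij}(t)$ is bounded (between $0$ and $a_{\max}$) and that the all-zero vector has positive probability (or more generally that from any state there is positive probability of reaching the zero state in finitely many slots because departures occur whenever a VOQ is nonempty and is the unique sampler/acceptor), the chain is irreducible on its reachable state space, matching the hypothesis of \autoref{thm:stability-tool}.

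With the Markov chain and irreducibility in hand, \autoref{lemma:QPS-property} supplies exactly the departure inequality~\autoref{eq:QPS-property} that \autoref{thm:stability-tool} requires. Applying \autoref{thm:stability-tool} under the assumption $\rho<1/2$ then yields conclusion (I): the chain is positive recurrent and converges to a stationary distribution $\bar Q$. For the finite first moment of $\bar Q$, I would simply cite conclusion (II) of \autoref{thm:stability-tool}, which as the authors note is delivered by the forthcoming \autoref{thm:delay-tool} (the expected queue-length bound that is the subject of the delay analysis).

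I do not anticipate a real obstacle in this corollary — it is a packaging step. The only mildly delicate point is the irreducibility verification, since \autoref{thm:stability-tool} lists it as a hypothesis; a clean way to dispatch it is to note that the standard i.i.d.\ Bernoulli/bounded arrival assumption together with QPS-1's work-conserving behavior on nonempty VOQs makes the zero state accessible from every state with positive probability, so irreducibility on the communicating class containing $0$ is immediate. Everything else is just invoking the two previously proven results.
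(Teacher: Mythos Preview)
Your proposal is correct and follows essentially the same route as the paper: verify that $\{Q(t)\}$ is a Markov chain via the recursion~\autoref{eq:queueing-dynamics-entrywise}, check irreducibility by arguing that the zero state is reachable from any state (the paper defers this to~\ref{app:proof-markov-for-qps-1}, using exactly the no-arrival/queue-clearing argument you sketch), and then invoke \autoref{lemma:QPS-property} together with \autoref{thm:stability-tool}. The finite-first-moment claim is likewise handled by pointing to part (II) of \autoref{thm:stability-tool}, which in turn is supplied by \autoref{thm:delay-tool}.
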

\begin{proof}
$\{Q(t)\}_{t=0}^{\infty}$ is clearly a Markov chain, since in~\autoref{eq:queueing-dynamics-entrywise}, 
the term $d_{ij}(t)$ is a function of 
$Q(t)$ and $a_{ij}(t)$ is a random variable independent of $Q(t)$. 
{\color{red}Its irreducibility is proved in~\ref{app:proof-markov-for-qps-1}.} 
The rest follows from~\autoref{lemma:QPS-property} and 
\autoref{thm:stability-tool}.
\end{proof}



\subsection{Delay Analysis}\label{subsec: delay-analysis}

In this section, we derive the bound on the expected total queue length $\mathbf{E}[\|\bar{Q}\|_1]$ (readily convertible to the corresponding delay
bound using Little's Law) for QPS-1 under {\it i.i.d.} traffic arrivals using
the following moment bound lemma ({\it i.e.,} \autoref{thm:moment-bound})~\cite[Proposition 2.1.4]{brucelecture2006}.
This bound, shown in~\autoref{eq:final-delay-bound}, is identical to that derived in~\cite[Section III.B]{Neely2008DelayMM,Neely2009MMDelay} for maximal matchings
under {\it i.i.d.} traffic arrivals.  Note this equivalence is not limited to {\it i.i.d.} traffic arrivals:
As will be shown in~\autoref{subsec:map-delay-ana}, the 
delay analysis results for general Markovian arrivals~\cite{mou2020heavy} derived in~\cite{Neely2008DelayMM,Neely2009MMDelay} for maximal
matchings (using the stronger ``departure inequality''~\autoref{eq:QPS-property-strong}) hold also for QPS-1.

\begin{lemma}\label{thm:moment-bound}
Suppose that $\{Y_t\}_{t=0}^{\infty}$ is a positive recurrent Markov chain with countable state space $\mathcal{Y}$. Suppose $V$, $f$, and $g$ are non-negative functions on $\mathcal{Y}$ such that,
\begin{equation}\label{eq:mbreq}
V(Y_{t+1}) -V(Y_{t})\leq -f(Y_{t}) + g(Y_{t}), \,\mbox{for all } Y_{t} \in \mathcal{Y}.
\end{equation}
Then $\mathbf{E}[f(\bar{Y})]\!\leq\! \mathbf{E}[g(\bar{Y})]$, where $\bar{Y}$ is a random variable with the stationary 
distribution of the Markov chain $\{Y_t\}_{t=0}^\infty$.
\end{lemma}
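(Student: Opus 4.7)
The plan is to prove this moment bound via the classical telescoping argument for Foster--Lyapunov drift inequalities, exploiting the non-negativity of $V$ to eliminate the terminal Lyapunov value, and then passing to the long-run time average using the ergodic theorem for positive recurrent Markov chains.

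First I would fix an arbitrary initial state $Y_0 = y_0 \in \mathcal{Y}$ and sum the one-step drift inequality~\autoref{eq:mbreq} over $t = 0, 1, \ldots, T-1$ to obtain the telescoping bound
\begin{equation*}
V(Y_T) - V(y_0) \;\le\; -\sum_{t=0}^{T-1} f(Y_t) + \sum_{t=0}^{T-1} g(Y_t).
\end{equation*}
Since $V \ge 0$ everywhere, I can drop $V(Y_T)$ on the left and rearrange to get $\sum_{t=0}^{T-1} f(Y_t) \le V(y_0) + \sum_{t=0}^{T-1} g(Y_t)$, a pathwise inequality that holds for every realization. Dividing through by $T$ gives
\begin{equation*}
\frac{1}{T}\sum_{t=0}^{T-1} f(Y_t) \;\le\; \frac{V(y_0)}{T} + \frac{1}{T}\sum_{t=0}^{T-1} g(Y_t).
\end{equation*}

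Second, I would let $T \to \infty$ and invoke the ergodic theorem for positive recurrent Markov chains. Because $\{Y_t\}$ is positive recurrent with unique stationary distribution equal to the law of $\bar{Y}$, the Cesàro average of any non-negative function converges almost surely to its stationary expectation (possibly $+\infty$). Hence $T^{-1}\sum_{t=0}^{T-1} g(Y_t) \to \mathbf{E}[g(\bar{Y})]$ and $T^{-1}\sum_{t=0}^{T-1} f(Y_t) \to \mathbf{E}[f(\bar{Y})]$ almost surely, while $V(y_0)/T \to 0$ since $V(y_0)$ is a finite constant. Taking the almost-sure limit on both sides of the pathwise inequality yields $\mathbf{E}[f(\bar{Y})] \le \mathbf{E}[g(\bar{Y})]$.

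The main obstacle in making this rigorous is the handling of possibly non-integrable quantities. If $\mathbf{E}[g(\bar{Y})] = \infty$ the conclusion is vacuous, so the interesting case is $\mathbf{E}[g(\bar{Y})] < \infty$, where the ergodic theorem gives honest finite limits. To handle $\mathbf{E}[f(\bar{Y})]$ cleanly without assuming its finiteness a priori, I would apply the ergodic theorem directly (which accommodates the value $+\infty$) or, alternatively, truncate $f$ by $f_n \triangleq \min(f,n)$, conclude $\mathbf{E}[f_n(\bar{Y})] \le \mathbf{E}[g(\bar{Y})]$ for every $n$, and then use monotone convergence as $n \to \infty$ to recover the full statement. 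This truncation side-steps any integrability subtleties while using only the assumed non-negativity of $V$, $f$, and $g$.
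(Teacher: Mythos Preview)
The paper does not supply its own proof of this lemma; it is quoted verbatim as Proposition~2.1.4 of the cited lecture notes~\cite{brucelecture2006} and used as a black box. Your telescoping-plus-ergodic-theorem argument is correct and is precisely the standard proof of this Foster--Lyapunov moment bound, so there is nothing in the paper to compare against.
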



Now we derive the following bound on $\mathbf{E}[\|\bar{Q}\|_1]$, which is stronger than the part (II) of~\autoref{thm:stability-tool}.
\begin{theorem}\label{thm:delay-tool}
Under the same assumptions and definitions as in~\autoref{thm:stability-tool}, we have
\begin{equation}\label{eq:final-delay-bound}
\mathbf{E}[\|\bar{Q}\|_1]\leq\frac{1}{2(1-2\rho)}\sum_{i,j}\big{(}\sigma^2_{ij}\!-\!\lambda_{ij}\Lambda^\dagger_{ij}\!+\!\lambda_{ij}\big{)}.
\end{equation}
\end{theorem}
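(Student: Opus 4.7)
The plan is to apply the moment bound lemma (Lemma~\ref{thm:moment-bound}) to the same Lyapunov function $L(Q(t))=\sum_{i,j}q_{ij}(t)Q^\dagger_{ij}(t)$ used in the throughput proof, identifying its one-step conditional drift as a function of the form $-f(Q(t))+g(Q(t))$. Concretely, I will take $f(Q)=2(1-2\rho)\,\|Q\|_1$ and $g(Q)$ to be the conditional expectation, given $Q(t)=Q$, of the ``second-order'' term $\sum_{i,j}\bigl(a_{ij}(t)-d_{ij}(t)\bigr)\bigl(A^\dagger_{ij}(t)-D^\dagger_{ij}(t)\bigr)$. Once the lemma gives $\mathbf{E}[f(\bar{Q})]\le \mathbf{E}[g(\bar{Q})]$, the claimed inequality falls out after dividing by $2(1-2\rho)>0$.

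To control the first drift term, I would reuse the computation from the proof of~\autoref{thm:stability-tool}: start from the exact drift decomposition~\autoref{eq:uncond-drift}, then take the conditional expectation given $Q(t)$. Using independence of $A(t)$ from $Q(t)$, $\mathbf{E}[A^\dagger_{ij}(t)]=\Lambda^\dagger_{ij}\le 2\rho$ (\autoref{fact:arrival-general}), and the weaker departure inequality~\autoref{eq:QPS-property} guaranteed by QPS-1 via~\autoref{lemma:QPS-property}, I obtain $\mathbf{E}\bigl[2\sum_{i,j}q_{ij}(t)(A^\dagger_{ij}(t)-D^\dagger_{ij}(t))\mid Q(t)\bigr]\le 2(2\rho-1)\|Q(t)\|_1=-f(Q(t))$. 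This is exactly the inequality already established en route to the throughput bound, just retained with its explicit $\|Q(t)\|_1$ dependence instead of being absorbed into a constant.

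For the second term, I would use the identity~\autoref{eq:for-delay-analysis}, which was derived precisely as a steady-state consequence of $\mathbf{E}[d_{ij}(t)]=\lambda_{ij}$ and~\autoref{fact:departure-general-2}. Its unconditional stationary expectation is $\sum_{i,j}(\sigma^2_{ij}-\lambda_{ij}\Lambda^\dagger_{ij}+\lambda_{ij})$, which will be the right-hand side $\mathbf{E}[g(\bar{Q})]$. Note that each summand is non-negative because $\lambda_{ij}(1-\Lambda^\dagger_{ij})\ge\lambda_{ij}(1-2\rho)>0$ and $\sigma^2_{ij}\ge 0$, so the hypotheses of the moment bound lemma on the signs of $f$ and $g$ (after possibly a cosmetic shift) are not violated.

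The main technical obstacle is ensuring a clean application of~\autoref{thm:moment-bound}: the lemma requires $V,f,g\ge 0$ and a pointwise (conditional) drift inequality, whereas $g(Q)$ as I have defined it is a conditional expectation that depends on $Q$ in a complicated way through the QPS-1 acceptance probabilities. I would address this by first noting that~\autoref{thm:stability-tool} already gives positive recurrence so $\bar{Q}$ exists, then showing that the uniform-in-$Q$ bound $g(Q)\le CN^2$ established in the proof of~\autoref{thm:stability-tool} is enough to make every expectation in the argument finite, so that tower the identity $\mathbf{E}[L(Q(t+1))-L(Q(t))]=0$ in stationarity is legitimate. Once this integrability is in hand, unconditionally taking expectations in the drift decomposition and rearranging produces~\autoref{eq:final-delay-bound} directly, bypassing any delicate invocation of the sign conventions in~\autoref{thm:moment-bound}.
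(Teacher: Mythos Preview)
Your approach is essentially the paper's: same Lyapunov function, same two-term drift decomposition, and an appeal to the moment bound lemma to convert the drift inequality into a steady-state first-moment bound. The paper's only device beyond what you wrote is to make the ``cosmetic shift'' explicit: it sets $h(Q(t))=4Na_{\max}\|Q(t)\|_1$ and defines $f$ and $g$ as the two raw drift terms \emph{plus} $h$, so that both are nonnegative and the identity $L(Q(t+1))-L(Q(t))=-f+g$ holds pointwise; the $h$'s then cancel when you pass to $\mathbf{E}[f(\bar Y)]-\mathbf{E}[h(\bar Y)]\le \mathbf{E}[g(\bar Y)]-\mathbf{E}[h(\bar Y)]$.

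One caution on your fallback route: arguing directly from $\mathbf{E}[L(Q(t+1))-L(Q(t))]=0$ in stationarity requires $\mathbf{E}[L(\bar Q)]<\infty$, which is a second-moment-type statement on $\bar Q$ that is \emph{not} supplied by Theorem~\ref{thm:stability-tool} (that theorem only gives a finite first moment, and in fact relies on the present theorem for it). The point of the moment bound lemma is precisely to avoid this circularity, so you should stick with it and carry out the shift explicitly rather than bypass it.
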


\begin{proof}

{\color{red}
We replace function $V$ in~\autoref{thm:moment-bound} by $L$, the Lyapunov function used in the proof of~\autoref{thm:stability-tool} 
and $Y_t$ by the queue length matrix $Q(t)$. Let $f(Y_t)\triangleq -2\sum_{i,j}q_{ij}(t)\big{(}A^\dagger_{ij}(t)-D^\dagger_{ij}(t)\big{)} + h(Y_t)$, where $h(Y_t)\triangleq 4Na_{max}\|Q(t)\|_1$, and $g(Y_t)\triangleq \sum_{i,j}\big{(}a_{ij}(t)\!-\!d_{ij}(t)\big{)}\big{(}A^\dagger_{ij}(t)\!-\!D^\dagger_{ij}(t)\big{)} + h(Y_t)$. It is not hard to check that both $f(\cdot)$ and $g(\cdot)$ 
are non-negative functions and inequality~\autoref{eq:mbreq} holds for $V,Y_t,f,g$ defined above. 
Furthermore, we have proved before, $\{Q(t)\}_{t=0}^{\infty}$ is a positive Markov chain whenever the 
maximum load factor $\rho<1/2$. Hence, we have, in steady state, 
\begin{align}
&-2(2\rho-1)\mathbf{E}[\|\bar{Q}\|_1]\nonumber\\
=&-2(2\rho-1)\mathbf{E}[\|Q(t)\|_1]\nonumber\\
\le&\mathbf{E}\big{[}-2{\textstyle \sum\limits_{i,j}}q_{ij}(t)\big{(}A^\dagger_{ij}(t)\!-\!D^\dagger_{ij}(t)\big{)}\big{]} \label{eq:qb-explain-01}\\
=& \mathbf{E}[f(Y_t) - h(Y_t)] \notag \\
=& \mathbf{E}[f(\bar{Y}) - h(\bar{Y})] \notag \\
=&\mathbf{E}[f(\bar{Y})] - \mathbf{E}[h(\bar{Y})] \notag\\
\leq&\mathbf{E}[g(\bar{Y})]- \mathbf{E}[h(\bar{Y})]\label{eq:qb-explain-02}\\
=&\mathbf{E}[g(\bar{Y})- h(\bar{Y})]\notag\\
=&\mathbf{E}\big{[}\big{(}a_{ij}(t)\!-\!d_{ij}(t)\big{)}\big{(}A^\dagger_{ij}(t)\!-\!D^\dagger_{ij}(t)\big{)}\big{]} \notag \\
=&\sum_{i,j}\big{(}\sigma^2_{ij}\!-\!\lambda_{ij}\Lambda^\dagger_{ij} +\lambda_{ij}\big{)}.\label{eq:qb-explain-03}
\end{align}
In the above derivation, inequality \autoref{eq:qb-explain-01} is
because taking expectation on both sides of~\autoref{eq:term-1-tbp}, we have $\mathbf{E}\big{[}2{\textstyle \sum\limits_{i,j}}q_{ij}(t)\big{(}A^\dagger_{ij}(t)\!-\!D^\dagger_{ij}(t)\big{)}\big{]}\le 2(2\rho-1)\mathbf{E}[\|Q(t)\|_1]$. 
Thus, we obtain
\[
-2(2\rho-1)\mathbf{E}[\|Q(t)\|_1] \le \mathbf{E}\big{[}-2{\textstyle \sum\limits_{i,j}}q_{ij}(t)\big{(}A^\dagger_{ij}(t)\!-\!D^\dagger_{ij}(t)\big{)}\big{]}.
\] 
Inequality \autoref{eq:qb-explain-02} is
due to~\autoref{thm:moment-bound}, and equality~\autoref{eq:qb-explain-03} is
due to~\autoref{eq:for-delay-analysis}.
}

Therefore, we have, in steady state, 
\begin{equation*}
\mathbf{E}[\|\bar{Q}\|_1]\leq\frac{1}{2(1-2\rho)}\sum_{i,j}\big{(}\sigma^2_{ij}\!-\!\lambda_{ij}\Lambda^\dagger_{ij}\!+\!\lambda_{ij}\big{)}.
\end{equation*}
\end{proof}

Since, as explained in the proof of~\autoref{coro:stability-qps}, 
$\{Q(t)\}_{t=0}^{\infty}$ is an irreducible 
Markov chain under {\it i.i.d.} arrivals whose maximum load factor $\rho<1/2$,~\autoref{thm:delay-tool} applies to QPS-1. Hence we obtain,
\begin{corollary}\label{coro:qps-delay}
The bound on $\mathbf{E}[\|\bar{Q}\|_1]$ as stated in~\autoref{eq:final-delay-bound} holds under QPS-1 scheduling, 
whenever the arrival process is {\it i.i.d.} and the maximum load factor $\rho<1/2$. 
\end{corollary}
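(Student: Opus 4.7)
The plan is to derive this corollary by directly verifying that QPS-1 satisfies all of the hypotheses of Theorem~\ref{thm:delay-tool}, which is a general result that is agnostic to the particular switching algorithm used. Since Theorem~\ref{thm:delay-tool} is formulated in terms of an abstract switching algorithm whose queueing process $\{Q(t)\}_{t=0}^{\infty}$ is an irreducible Markov chain, and whose departure process satisfies the weaker ``departure inequality''~\autoref{eq:QPS-property}, under the assumption $\rho<1/2$, the work to be done is essentially a checklist.

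First, I would confirm that, under QPS-1 with i.i.d.\ arrivals satisfying $\rho<1/2$, the process $\{Q(t)\}_{t=0}^{\infty}$ is indeed an irreducible Markov chain. The Markov property is immediate from the queueing recursion~\autoref{eq:queueing-dynamics-entrywise}, since $d_{ij}(t)$ is a (randomized) function only of the current state $Q(t)$ and $a_{ij}(t)$ is independent of $Q(t)$. Irreducibility is addressed in the proof of Corollary~\ref{coro:stability-qps} (and its referenced appendix), so that step is already in hand. Second, I would invoke Lemma~\ref{lemma:QPS-property}, which establishes precisely that the matching generated by QPS-1 satisfies~\autoref{eq:QPS-property}. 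Third, the hypothesis $\rho<1/2$ is given in the statement of the corollary itself.

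With the three hypotheses verified, the conclusion~\autoref{eq:final-delay-bound} follows verbatim from Theorem~\ref{thm:delay-tool}. Note that positive recurrence of $\{Q(t)\}_{t=0}^{\infty}$ (needed to make sense of the steady-state expectation $\mathbf{E}[\|\bar{Q}\|_1]$) has already been obtained in Corollary~\ref{coro:stability-qps} via the same reduction to Theorem~\ref{thm:stability-tool}, so the stationary distribution $\bar{Q}$ is well defined.

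There is no serious obstacle here: the corollary is essentially a reuse of the abstract Theorem~\ref{thm:delay-tool}, with Lemma~\ref{lemma:QPS-property} supplying the one property of QPS-1 that is specific to the algorithm. The only thing worth flagging is the mild care required to cite the irreducibility argument from Corollary~\ref{coro:stability-qps} rather than re-deriving it, and to ensure that the steady-state identity $\mathbf{E}[d_{ij}]=\lambda_{ij}$ (used implicitly through~\autoref{eq:for-delay-analysis} inside the proof of Theorem~\ref{thm:delay-tool}) applies in this QPS-1 setting, which it does because positive recurrence has been established.
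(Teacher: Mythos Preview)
Your proposal is correct and follows essentially the same approach as the paper: verify that under QPS-1 with i.i.d.\ arrivals and $\rho<1/2$ the queueing process is an irreducible Markov chain (by citing the proof of Corollary~\ref{coro:stability-qps}), invoke Lemma~\ref{lemma:QPS-property} for the weaker departure inequality, and then apply Theorem~\ref{thm:delay-tool}. The paper's own justification is even more terse, but the logical content is identical.
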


It is not hard to check (by applying Little's Law) that the average delay (experienced by packets) is
bounded by a constant independent of $N$ ({\it i.e., order-optimal}) for a given maximum load factor $\rho<1/2$,
since the variance $\sigma^2_{ij}$ for any $i,j$ is assumed to be finite in~\autoref{subsec:background-notation}. 
For the special case of Bernoulli {\it i.i.d.} arrival (when $\sigma^2_{ij}=\lambda_{ij}-\lambda^2_{ij}$),
this bound (the RHS) can be further tightened to $\frac{\sum_{i,j}\lambda_{ij}}{1-2\rho}$.  This implies, by Little's Law, the following ``clean'' bound: $\bar{\omega}\leq \frac{1}{1-2\rho}$
where $\bar{\omega}$ is 
the expected delay averaged over
all packets transmitting through the switch.
\section{Throughput and Delay Analysis under Markovian Arrivals}\label{sec:map}

In this section, we will generalize our results  in~\autoref{sec:throughput-and-delay-analysis} 
to the more general Markovian arrivals~\cite{mou2020heavy}. 

\subsection{Preliminaries}\label{subsec:prelim-map}

As mentioned earlier the traffic arrival matrix $A(t)$ now are Markovian arrivals. 
More precisely, for any $1\le i,j\le N$, the number of arrivals $a_{ij}(t)\triangleq \eta(x_{ij}(t))$
for some non-negative integer-valued function $\eta(\cdot)$, where $\{x_{ij}(t)\}_{t=0}^{\infty}$ is
an irreducible, positive recurrent and aperiodic discrete-time Markov chain (DTMC) on a finite-state space $\mathcal{X}_{ij}\triangleq\{1,2,\cdots,\chi_{ij}\}$ for some integer $\chi_{ij}>0$ 
and $\{X(t)\}_{t=0}^{\infty}$ is also an irreducible, positive recurrent and aperiodic DTMC where $X(t)\triangleq \big{(}x_{ij}(t)\big{)}$. 

Define the steady-state arrival rate $\lambda_{ij}\triangleq \mathbf{E}[\eta(\bar{x}_{ij})]$, where
$\bar{x}_{ij}$ is the stationary distribution that $\{x_{ij}(t)\}_{t=0}^{\infty}$ converges to.
Like in~\cite{Neely2008DelayMM,Neely2009MMDelay}, we assume that the underlying DTMC 
$\{x_{ij}(t)\}_{t=0}^\infty$ for all $1\le i,j\le N$ is in steady state at time
$0$, thus each arrival process $\{a_{ij}(t)\}_{t=0}^{\infty}$ is stationary, and therefore, we have,
$\mathbf{E}[a_{ij}(t)]=\mathbf{E}[\eta(\bar{x}_{ij})]=\lambda_{ij}$ for any $1\le i,j\le N$ and any
time slot $t\ge 0$. Hence, \autoref{fact:arrival-general} also holds for the arrival process $A(t)$. 
Like in~\autoref{subsec:background-notation}, 
we assume that $a_{ij}(t)$ is upper-bounded by $a_{max}$ for any $i,j$ at any time slot $t$. 
We further assume that given any integer $\tau >0$, there exists some $\beta >0$ such that $\mathbf{P}[A(t')=\mathbf{0}]\ge \beta$ 
for any $t' \in \{t,t+1,\cdots,t+\tau\}$, where $\mathbf{0}$ is the $N\times N$ matrix with all its entries equal to 0. 
In other words, the switch could, with a nonzero
probability, have no packet arrivals to any of its VOQs during time slots $t,t+1,\cdots,t+\tau$. 

As mentioned before, we will prove in this section that,
QPS-1 has the same provable throughput and delay guarantees as using maximal matchings
under Markovian arrival processes $A(t)$. Before that, we
state an fact concerning Markovian arrival processes $A(t)$.

\begin{fact}\label{fact:dependent-arrival-rate-map}
Given a Markovian arrival process $\{a_{ij}(t)\}_{t=0}^{\infty}$ defined above, 
let $H(t)$ be the history of all arrivals (for all VOQs) up to but excluding time slot $t$, 
then regardless of the past history $H(t)$, we have, 
\begin{equation}\label{eq:arrival-exp-general}
    \mathbf{E}\left[a_{ij}(t)\mid H(t-T)\right]\leq \lambda_{ij}+\zeta_{ij}(T),
\end{equation}
where $\zeta_{ij}(T)$ is some non-negative function of integer $T$ such that it converges to
$0$ exponentially fast as $T$ approaches to $\infty$.
\end{fact}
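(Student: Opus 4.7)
The goal is to show that conditional expectations of $a_{ij}(t)$ given sufficiently old history $H(t-T)$ converge exponentially fast in $T$ to the steady-state rate $\lambda_{ij}$, uniformly over the realisation of that history. The plan rests on two classical ingredients: the Markov property of the joint arrival-driving chain $\{X(t)\}$, and the geometric ergodicity of finite-state irreducible aperiodic DTMCs.

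First I would reduce the conditioning on $H(t-T)$ to a conditioning on the joint state $X(t-T)$. By the tower property,
\begin{equation*}
\mathbf{E}[a_{ij}(t)\mid H(t-T)] = \mathbf{E}\bigl[\mathbf{E}[a_{ij}(t)\mid H(t-T), X(t-T)] \,\big|\, H(t-T)\bigr].
\end{equation*}
Since $a_{ij}(t)=\eta(x_{ij}(t))$ is a deterministic function of $X(t)$, and $X(t)$ is obtained from $X(t-T)$ by $T$ steps of the joint chain, the Markov property collapses the inner conditional expectation to $\mathbf{E}[a_{ij}(t)\mid X(t-T)]$, a deterministic function of $X(t-T)$. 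The task thus reduces to bounding $\mathbf{E}[a_{ij}(t)\mid X(t-T)=x]$ uniformly in the starting state $x$.

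Second, I would invoke the standard geometric ergodicity of $\{X(t)\}$. The joint state space $\prod_{i,j}\mathcal{X}_{ij}$ is finite, and $\{X(t)\}$ is irreducible and aperiodic by assumption, so a Perron--Frobenius / spectral-gap argument yields constants $C>0$ and $\gamma\in(0,1)$ (depending only on the transition matrix, not on the starting state) such that
\begin{equation*}
\|P_X^T(x,\cdot)-\pi_X\|_{TV}\le C\gamma^T \quad\text{for every state }x,
\end{equation*}
where $\pi_X$ is the stationary distribution of $\{X(t)\}$. Because $\eta$ is bounded by $a_{max}$, dualising total variation against bounded test functions gives
\begin{equation*}
\bigl|\mathbf{E}[a_{ij}(t)\mid X(t-T)=x]-\mathbf{E}_{\pi_X}[\eta(x_{ij})]\bigr|\le 2a_{max}C\gamma^T.
\end{equation*}
The marginal of $\pi_X$ in the $(i,j)$-coordinate is exactly the stationary law $\bar{x}_{ij}$, so the right-hand stationary expectation equals $\lambda_{ij}$. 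Integrating this uniform-in-$x$ bound against the conditional law of $X(t-T)$ given $H(t-T)$ yields
\begin{equation*}
\mathbf{E}[a_{ij}(t)\mid H(t-T)]\le \lambda_{ij}+2a_{max}C\gamma^T,
\end{equation*}
so one may take $\zeta_{ij}(T)\triangleq 2a_{max}C\gamma^T$, which decays geometrically.

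The main subtlety is the reduction step: because $\eta$ need not be injective, $H(t-T)$ does not in general determine $X(t-T)$, so one cannot simply ``read off'' the starting state from observed arrivals. The tower-property/Markov-property argument above sidesteps this by integrating against the (unknown) conditional law of $X(t-T)$ and exploiting that the geometric-ergodicity bound is uniform in the starting state. Everything else is a standard invocation of finite-state DTMC convergence, so no machinery beyond what the paper has already assumed is required.
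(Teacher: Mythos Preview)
Your argument is correct and is exactly the standard route: reduce the conditioning on $H(t-T)$ to conditioning on the joint state $X(t-T)$ via the tower and Markov properties, then invoke geometric ergodicity of the finite, irreducible, aperiodic chain $\{X(t)\}$ to get a uniform-in-$x$ bound of the form $a_{max}\cdot C\gamma^T$. The paper itself does not prove this fact; it simply cites Appendix~A.2 of~\cite{mou2020heavy}, where the same spectral-gap/Perron--Frobenius argument is carried out. So your proposal essentially reconstructs the deferred proof.

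One small point you might make explicit: your identification $\mathbf{E}_{\pi_X}[\eta(x_{ij})]=\lambda_{ij}$ relies on the $(i,j)$-marginal of $\pi_X$ coinciding with the stationary law $\bar{x}_{ij}$ of the component chain. This holds under the paper's assumptions (both $\{X(t)\}$ and each $\{x_{ij}(t)\}$ are irreducible positive-recurrent DTMCs, so the marginal of the unique joint stationary law is invariant for the component chain and hence equals its unique stationary law), but it is worth a one-line justification since the components are not assumed independent.
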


The proof of the above fact can be found in Appendix A.2 of~\cite{mou2020heavy}.

For any non-negative integer $T$, we define
\begin{equation}\label{eq:load-bound}
   \xi(T)\triangleq \max_{l,w}\left[\Lambda_{lw}^\dagger +\sum_{(i,j)\in Q^\dagger_{lw}}\zeta_{ij}(T)\right].
\end{equation}

For any $\rho < 1/2$, we define 
\begin{equation}\label{fact:maximum-load-bounded}
K_\rho\triangleq \min\{k\in \mathbb{N}_+: \xi(T) < 1 \mbox{ for } T\ge k \}. 
\end{equation}
Note that such a $K_\rho$ exists because  
$\sum_{(i,j)\in Q^\dagger_{lw}}\zeta_{ij}(T) \rightarrow 0$ as $T\rightarrow \infty$ for any $l,w$ and 
when $\rho <1/2$, we have $\Lambda_{lw}^\dagger < 1$ for any $l,w$.


\subsection{Throughput Analysis}\label{subsec:map-throughput-delay-ana}

We now present the throughput result under Markovian arrivals, which generalizes the throughput result under {\it i.i.d.} arrivals, {\it i.e.}, \autoref{thm:stability-tool}. 
\begin{theorem}\label{thm:stability-tool-map}
Let $A(t)$ be Markovian arrivals determined by the underlying DTMC $\{X(t)\}_{t=0}^{\infty}$ as 
described before in \autoref{subsec:prelim-map}. 
Suppose the joint process $\{(Q(t),X(t))\}_{t=0}^{\infty}$ under Markovian arrivals 
$A(t)$ is an irreducible Markov chain. Let the departure process of $\{Q(t)\}_{t=0}^{\infty}$ satisfies the weaker ``departure inequality''~\autoref{eq:QPS-property}.
Then whenever the maximum load factor $\rho<1/2$, the switching system is stable in the following sense:
(I) The Markov chain $\{\left(Q(t),X(t)\right)\}_{t=0}^{\infty}$ is positive recurrent and hence
 the queueing process $\{Q(t)\}_{t=0}^{\infty}$ converges in distribution to a stationary distribution $\bar{Q}$; (II)
 The first moment of $\bar{Q}$ is finite.
\end{theorem}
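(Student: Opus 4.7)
The plan is to mimic the proof of \autoref{thm:stability-tool} but replace the one-step Foster--Lyapunov drift argument with a $T$-step drift argument to handle the temporal correlations in the Markovian arrival process. I retain the same Lyapunov function $L(Q(t))=\sum_{i,j}q_{ij}(t)Q^\dagger_{ij}(t)$ and apply Foster--Lyapunov to the joint Markov chain $\{(Q(t),X(t))\}_{t=0}^{\infty}$, computing $\mathbf{E}\bigl[L(Q(t+T))-L(Q(t))\,\big|\,Q(t),X(t)\bigr]$ for a horizon $T$ chosen to be a sufficiently large multiple of $K_\rho$ from~\autoref{fact:maximum-load-bounded}. The role of $K_\rho$ is the Markovian analog of the condition $2\rho<1$: for any $s\ge K_\rho$ the effective load $\xi(s)$ from~\autoref{eq:load-bound} is strictly below $1$, which is what drives the eventual negative drift. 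I prove only part~(I) here; part~(II) will follow from the Markovian generalization of~\autoref{thm:delay-tool} presented in the next subsection.

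To evaluate the $T$-step drift, I telescope it into $\sum_{s=0}^{T-1}\mathbf{E}\bigl[L(Q(t+s+1))-L(Q(t+s))\,\big|\,Q(t),X(t)\bigr]$ and apply the one-step expansion~\autoref{eq:uncond-drift} to each summand. The quadratic residual $\sum_{i,j}(a_{ij}(t+s)-d_{ij}(t+s))(A^\dagger_{ij}(t+s)-D^\dagger_{ij}(t+s))$ is bounded deterministically by an $O(N^2)$ constant since $a_{ij}\le a_{\max}$, $d_{ij}\le 1$, and $D^\dagger_{ij}\le 2$, so its total contribution is an additive $O(TN^2)$ constant. The main per-step term is $2\sum_{i,j} q_{ij}(t+s)\bigl(A^\dagger_{ij}(t+s)-D^\dagger_{ij}(t+s)\bigr)$. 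For the departure portion I peel the conditioning back to $Q(t+s)$ via the tower property and invoke the weaker departure inequality~\autoref{eq:QPS-property}, obtaining $\mathbf{E}\bigl[\sum_{i,j} q_{ij}(t+s) D^\dagger_{ij}(t+s)\,\big|\,Q(t),X(t)\bigr]\ge \mathbf{E}\bigl[\|Q(t+s)\|_1\,\big|\,Q(t),X(t)\bigr]\ge \|Q(t)\|_1-sN$, since at most $N$ packets depart per slot. For the arrival portion I split $q_{ij}(t+s)=q_{ij}(t)+\Delta_{ij}(s)$ with $|\Delta_{ij}(s)|\le s\,a_{\max}$ by the bounded-arrivals assumption; the $\Delta$ contribution absorbs into an additive constant, while for the deterministic piece $q_{ij}(t)$ I apply~\autoref{fact:dependent-arrival-rate-map} with time shift $s$ to bound $\mathbf{E}[A^\dagger_{ij}(t+s)\mid X(t)]\le \Lambda^\dagger_{ij}+\sum_{(i',j')\in Q^\dagger_{ij}}\zeta_{i'j'}(s)\le \xi(s)$.

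Combining the per-step bounds, the contribution of step $s$ to the $T$-step drift is at most $-2(1-\xi(s))\|Q(t)\|_1+O(1)$. By definition of $K_\rho$, the factor $1-\xi(s)$ is strictly positive for all $s\ge K_\rho$, while $\xi(s)$ is bounded by a finite constant for $s<K_\rho$ (again thanks to $a_{ij}\le a_{\max}$). Choosing $T$ large enough so that the negative contributions from $s\in\{K_\rho,\ldots,T-1\}$ outweigh the (at worst positive, bounded) contributions from $s\in\{0,\ldots,K_\rho-1\}$, I obtain $\mathbf{E}\bigl[L(Q(t+T))-L(Q(t))\,\big|\,Q(t),X(t)\bigr]\le -\epsilon\,\|Q(t)\|_1+C$ for some $\epsilon>0$ and constant $C$ depending on $N$, $T$, $a_{\max}$, and $\rho$. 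Positive recurrence of $\{(Q(t),X(t))\}_{t=0}^{\infty}$ (and hence convergence of $\{Q(t)\}$ to a stationary $\bar Q$) then follows from the standard Foster--Lyapunov criterion applied to the $T$-step skeleton chain, combined with the irreducibility assumption.

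The main obstacle will be cleanly handling the correlation between the intermediate queue state $q_{ij}(t+s)$ and the future arrival $A^\dagger_{ij}(t+s)$ after conditioning only on $(Q(t),X(t))$: both depend on $A(t),\ldots,A(t+s-1)$, which are themselves correlated through the underlying chain $X$. The decomposition $q_{ij}(t+s)=q_{ij}(t)+\Delta_{ij}(s)$ together with boundedness of arrivals is what tames this correlation, since the ``clean'' piece $q_{ij}(t)$ is deterministic given the conditioning, after which~\autoref{fact:dependent-arrival-rate-map} applies directly to $A^\dagger_{ij}(t+s)$ alone. A secondary subtlety is choosing $T$ large enough to ensure a net negative drift; since $K_\rho$ is finite whenever $\rho<1/2$ and $\xi(s)$ decreases to at most $2\rho<1$, such a $T$ always exists.
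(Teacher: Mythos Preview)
Your proposal is correct and follows essentially the same approach as the paper: the same Lyapunov function, a multi-step Foster--Lyapunov drift argument on the joint chain $(Q(t),X(t))$, the weak departure inequality applied at intermediate times via the tower property, \autoref{fact:dependent-arrival-rate-map} for the arrival side, and boundedness of arrivals to control cross terms. The only organizational difference is that you telescope the $T$-step drift into a sum of one-step drifts and bound each summand, whereas the paper expands $L(Q(t+n))-L(Q(t))$ directly via~\autoref{eq:n-step-queue-evo}--\autoref{eq:n-step-queue-evo-dagger}; your per-step ``$O(1)$'' residuals are really $O(s)$, but this is harmless since the sum over $s<T$ is still a constant depending on $T$.
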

\begin{proof}
See \ref{app:proof-stability}.
\end{proof}

It is not hard to check that \autoref{thm:stability-tool-map} applies to 
QPS-1. The proof can be found in~\ref{app:proof-markov-map-for-qps-1}. 
Therefore, QPS-1 can also attain at least 50\% throughput under Markovian arrivals.

\subsection{Delay Analysis}\label{subsec:map-delay-ana}

Now, we present bounds on the expected total queue length $\mathbf{E}[\|\bar{Q}\|_1]$ for QPS-1 under Markovian arrivals.

\begin{theorem}\label{thm:delay-tool-map}
Under the same assumptions and definitions as in~\autoref{thm:stability-tool-map},
we have: (I) The average (total) queue length is upper-bounded as follows.
\begin{align}
\mathbf{E}[\|\bar{Q}\|_1]&\leq\frac{1}{2(1-\xi(K_\rho))}\left(\sum_{i,j}\left(\mathbf{E}[a_{ij}(0)A^\dagger_{ij}(0)] +\lambda_{ij}\right) \right.\notag \\
&\qquad{}\left.+ 2\sum_{i,j}\mathbf{E}\left[\sum_{k=0}^{K_\rho-1}a_{ij}(k-K_\rho)A^\dagger_{ij}(0)\right]\right), \label{eq:final-delay-bound-general}
\end{align}
where $\xi(\cdot)$ and $K_\rho$ are defined in~\autoref{eq:load-bound} and~\autoref{fact:maximum-load-bounded} respectively.

(II) If arrival processes $a_{ij}(t)$ for different $i,j$ are independent of
each other, then the above queue length bound can be further simplified as follows.
\begin{align}
\mathbf{E}[\|\bar{Q}\|_1]\!\le\!\frac{1}{2(1\!-\!\xi(K_\rho))}\left(\sum_{i,j}\left(\sigma_{ij}^2 \!+\! \lambda_{ij}\Lambda_{ij}^\dagger+\lambda_{ij}\right) \!+\! 2\sum_{i,j}\sum_{k=1}^{K_\rho}\left(\lambda_{ij}\Lambda_{ij}^\dagger \!+\!\theta_{ij}(k)\right)\right), \label{eq:final-delay-bound-general-independent}
\end{align}
where $\sigma^2_{ij}\triangleq \mathbf{E}\left[a_{ij}^2(t)\right] - \lambda^2_{ij}$ is the steady-sate variance of $a_{ij}(t)$ and 
$\theta_{ij}(k)\triangleq \mathbf{E}\left[a_{ij}(k+t)a_{ij}(t)\right] - \lambda^2_{ij}$ is
the auto-correlation in $a_{ij}(t)$ in steady state.
\end{theorem}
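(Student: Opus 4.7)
The plan is to mirror the proof of Theorem~\ref{thm:delay-tool} (the i.i.d.\ case), using the same Lyapunov function $L(Q(t)) = \sum_{i,j} q_{ij}(t) Q^\dagger_{ij}(t)$ and invoking Lemma~\ref{thm:moment-bound} on the joint Markov chain $\{(Q(t), X(t))\}_{t=0}^\infty$, which is positive recurrent by Theorem~\ref{thm:stability-tool-map}. The drift decomposition \autoref{eq:uncond-drift} still holds pointwise. Defining $f$, $g$, $h$ in the same way as in the proof of Theorem~\ref{thm:delay-tool} and applying Lemma~\ref{thm:moment-bound} to the joint chain, I expect to arrive at the steady-state inequality
\[
2\, \mathbf{E}[\|\bar Q\|_1] \le 2 \sum_{i,j} \mathbf{E}[q_{ij}(t) A^\dagger_{ij}(t)] + \sum_{i,j} \mathbf{E}\!\left[(a_{ij}(t)-d_{ij}(t))(A^\dagger_{ij}(t)-D^\dagger_{ij}(t))\right],
\]
once the weaker departure inequality \autoref{eq:QPS-property} is used to bound $\mathbf{E}[\sum_{i,j} q_{ij}(t) D^\dagger_{ij}(t)]$ from below by $\mathbf{E}[\|\bar Q\|_1]$.

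The main new step --- and the source of the $\xi(K_\rho)$ factor --- is to upper-bound $\mathbf{E}[q_{ij}(t) A^\dagger_{ij}(t)]$, which in the i.i.d.\ case decoupled into $\Lambda^\dagger_{ij}\mathbf{E}[\bar q_{ij}]$ by independence but now couples $Q(t)$ with the underlying Markov chain driving $A(t)$. The trick I will use is a look-back of $K_\rho$ slots: since departures only decrease the queue, $q_{ij}(t) \le q_{ij}(t-K_\rho) + \sum_{k=1}^{K_\rho} a_{ij}(t-k)$, and $q_{ij}(t-K_\rho)$ is measurable with respect to $H(t-K_\rho)$. Conditioning yields
\[
\mathbf{E}[q_{ij}(t-K_\rho) A^\dagger_{ij}(t)\mid H(t-K_\rho)] = q_{ij}(t-K_\rho) \sum_{(l,w)\in Q^\dagger_{ij}} \mathbf{E}[a_{lw}(t)\mid H(t-K_\rho)] \le \xi(K_\rho)\, q_{ij}(t-K_\rho)
\]
by \autoref{fact:dependent-arrival-rate-map} together with the definition \autoref{eq:load-bound} of $\xi$. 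Summing over $(i,j)$ and using stationarity of $(Q(t),X(t))$, this gives $2\sum_{i,j}\mathbf{E}[q_{ij}(t) A^\dagger_{ij}(t)] \le 2\xi(K_\rho)\, \mathbf{E}[\|\bar Q\|_1] + 2\sum_{i,j}\mathbf{E}\!\left[\sum_{k=0}^{K_\rho-1} a_{ij}(k-K_\rho) A^\dagger_{ij}(0)\right]$, where the reindexing $k \mapsto K_\rho-k$ rewrites $\sum_{k=1}^{K_\rho} a_{ij}(-k)$ in the form appearing in \autoref{eq:final-delay-bound-general}.

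To handle the quadratic correction term $\sum_{i,j} \mathbf{E}[(a_{ij}-d_{ij})(A^\dagger_{ij}-D^\dagger_{ij})]$ I will expand the product, use \autoref{fact:departure-general-2} to obtain $\mathbf{E}[d_{ij}(t)D^\dagger_{ij}(t)] = \mathbf{E}[d_{ij}(t)] = \lambda_{ij}$ in steady state (by flow conservation, which holds since the joint chain is positive recurrent), and discard the non-negative contributions $-\mathbf{E}[a_{ij} D^\dagger_{ij}]$ and $-\mathbf{E}[d_{ij} A^\dagger_{ij}]$ to obtain the upper bound $\sum_{i,j}(\mathbf{E}[a_{ij}(0) A^\dagger_{ij}(0)] + \lambda_{ij})$. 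Substituting both bounds into the displayed inequality and moving the $2\xi(K_\rho)\mathbf{E}[\|\bar Q\|_1]$ term to the left side yields $2(1-\xi(K_\rho))\mathbf{E}[\|\bar Q\|_1] \le \cdots$; dividing by $2(1-\xi(K_\rho))$ --- which is strictly positive by the very definition of $K_\rho$ in \autoref{fact:maximum-load-bounded} --- gives part~(I).

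Part~(II) then follows by direct substitution. Under independence of arrival processes across distinct $(i,j)$ pairs, the cross correlations $\mathbf{E}[a_{ij}(-k) a_{lw}(0)]$ for $(l,w)\neq (i,j)$ equal $\lambda_{ij}\lambda_{lw}$, while the $(l,w)=(i,j)$ diagonal contributes the auto-correlation $\theta_{ij}(k)+\lambda_{ij}^2$ at lag $k\ge 1$ and $\sigma_{ij}^2+\lambda_{ij}^2$ at $k=0$; summing over the neighborhood $Q^\dagger_{ij}$ gives $\mathbf{E}[a_{ij}(-k)A^\dagger_{ij}(0)] = \theta_{ij}(k)+\lambda_{ij}\Lambda^\dagger_{ij}$ and $\mathbf{E}[a_{ij}(0)A^\dagger_{ij}(0)] = \sigma_{ij}^2+\lambda_{ij}\Lambda^\dagger_{ij}$, from which \autoref{eq:final-delay-bound-general-independent} follows. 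The main obstacle throughout is choosing the look-back window $K_\rho$ just large enough that the exponential mixing bound of \autoref{fact:dependent-arrival-rate-map} brings the effective load $\xi(K_\rho)$ strictly below one, while keeping the $K_\rho$ boundary arrival terms in the bound finite; beyond that, the argument is a careful conditioning-on-history extension of the i.i.d.\ template.
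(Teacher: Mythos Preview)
Your proposal is correct and follows essentially the same approach as the paper: the paper also applies \autoref{thm:moment-bound} to the joint chain $Z(t)=(Q(t),X(t))$ with the same $L$, $f$, $g$, $h$, uses the $K_\rho$-step look-back $q_{ij}(t)\le q_{ij}(t-K_\rho)+\sum_{k=0}^{K_\rho-1}a_{ij}(t-K_\rho+k)$ together with \autoref{fact:dependent-arrival-rate-map} and \autoref{eq:load-bound} to get $\mathbf{E}[\sum_{i,j}q_{ij}(t)A^\dagger_{ij}(t)]\le \xi(K_\rho)\mathbf{E}[\|\bar Q\|_1]+\sum_{i,j}\mathbf{E}[\sum_{k=0}^{K_\rho-1}a_{ij}(k-K_\rho)A^\dagger_{ij}(0)]$, bounds $\mathbf{E}[\sum_{i,j}q_{ij}D^\dagger_{ij}]\ge \mathbf{E}[\|\bar Q\|_1]$ via \autoref{eq:QPS-property}, and handles the quadratic term exactly as you describe (expand, apply \autoref{fact:departure-general-2}, use $\mathbf{E}[d_{ij}]=\lambda_{ij}$ in steady state, drop the two non-positive cross terms). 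Part~(II) in the paper is likewise the direct covariance computation you outline.
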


\begin{proof}
See \ref{app:proof-delay}.
\end{proof}

Similarly, we have that the bounds on $\mathbf{E}[\|\bar{Q}\|_1]$ as stated 
in~\autoref{thm:delay-tool-map} hold under QPS-1 scheduling,
whenever the arrival processes are Markovian arrivals and the maximum load factor $\rho<1/2$.


\section{Evaluation}\label{sec:evaluation}

In this section, we evaluate, through simulations, the performance of
QPS-r under various load conditions and traffic patterns. 
The main purpose of this section to show that QPS-r 
performs as well as maximal matching algorithms empirically no just in theory. 
Hence, we do not compare QPS-r  
with the two recent iterative algorithms in switching~\footnote{Note that they are shown to have reasonably good empirical throughput and delay performance over round-robin-friendly workloads such as uniform and hot-spot traffic when running $1$ or $2$ iterations. However, as described in~\autoref{sec: related-work}, they need to run up to $N$ iterations to provably attain at least 50\% throughput.}: RR/LQF (Round Robin combined with Longest Queue
First)~\cite{Hu2016IterSched} and 
HRF (Highest Rank First)~\cite{Hu_HighestRankFirst_2018}. Instead, 
we compare its 
performance with
that of iSLIP~\cite{McKeown99iSLIP}, a refined and optimized representative parallel maximal matching
algorithm (adapted for switching).  The 
performance of the 
MWM (Maximum Weighted Matching) is also included in the
comparison as a benchmark. Our simulations show conclusively that QPS-1 (running 1 iteration)
performs very well inside 
the provable stability region 
(more precisely, with no more than 50\% offered load), and that QPS-3 (running $3$ iterations) 
has comparable throughput and delay performances
as iSLIP (running $\log_2 N$ iterations), which has a much higher per-port computational 
complexity of $O(\log^2N)$. 

\subsection{Simulation Setup}
\label{subsec: sim-setting}
In our simulations, we 
fix the number of input/output ports, $N$ to $64$.
Later, we investigate how the mean delay performances of these
algorithms scale with respect to $N$ and the findings are reported
in \autoref{subsec: delay-vs-port}.
To measure throughput and delay accurately, we assume each VOQ has an infinite
buffer size and hence there is no packet drop at any input port.
Each simulation run is guided by the following stopping
rule~\cite{Flegal2010MCStopRule,Glynn1992MCStopRule}: The number of time slots simulated is the larger between $500N^2$ and
that is needed
for the difference between the estimated and the actual average delays to be within $0.01$ 
with probability at least $0.98$. 



We assume in our simulations that
each traffic arrival matrix $A(t)$ is Bernoulli {\it i.i.d.} with its traffic
rate matrix $\Lambda$ being equal to the product of the offered load and a traffic pattern
matrix (defined next). Similar Bernoulli arrivals were studied in~\cite{GiacconePrabhakarShah2003SERENA,McKeown99iSLIP,Gong2017QPS}. 
Later, in~\autoref{subsubsec: bursty-arrivals-bench}, we will look at burst traffic arrivals. 
Note that only synthetic traffic (instead of that derived from packet traces) is used in our simulations
because, to the best of our knowledge, there is no meaningful way to combine
packet traces into switch-wide traffic workloads. 
The following four standard types of normalized (with each row or column sum equal to $1$) traffic patterns are used:
(I) \emph{Uniform}: packets arriving at any input port go
to each output port with probability $\frac{1}{N}$.
(II) \emph{Quasi-diagonal}: packets arriving at input port $i$ go to
output port $j \!=\! i$ with probability $\frac{1}{2}$ and go to any other output port
with probability $\frac{1}{2(N-1)}$.
(III) \emph{Log-diagonal}: packets arriving at input port $i$ go
to output port $j = i$ with probability $\frac{2^{(N-1)}}{2^N - 1}$ and
go to any other output port $j$ with probability equal $\frac{1}{2}$ of the
probability of output port $j - 1$ (note: output port $0$ equals output port $N$).
(IV) \emph{Diagonal}: packets arriving at input port $i$ go to
output port $j \!=\! i$ with probability $\frac{2}{3}$, or go to output port
$(i\, \text{mod} \, N) + 1$ with probability $\frac{1}{3}$.
These traffic patterns are listed in order of how skewed the volumes of traffic arrivals to different output
ports are: from uniform being the least skewed, to diagonal being the most skewed.

\subsection{Throughput and Delay Performances}
\label{subsec: benchmark-performance}

\begin{figure}
\centering
\includegraphics[width=\textwidth]{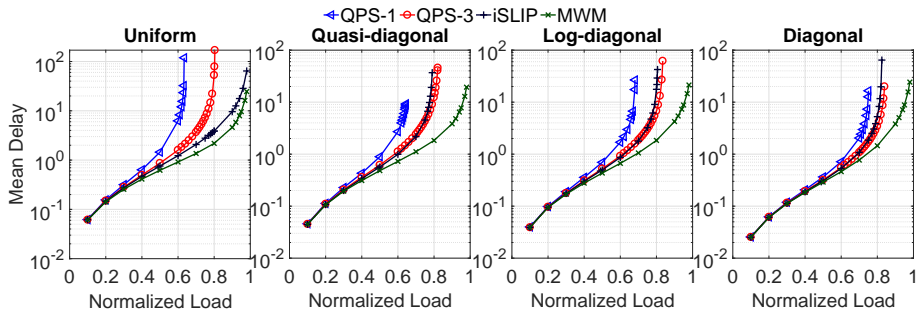}
\caption{Mean delays of QPS-1, QPS-3, iSLIP, and MWM under the $4$ traffic load patterns.}\label{fig: delay-load-bench}
\end{figure}

\subsubsection{Bernoulli Arrivals}\label{subsubsec:bernoulli-arrivals}

We first compare the throughput and delay
performances of QPS-1 (1 iteration), QPS-3 (3 iterations), iSLIP ($\log_2 64$ = 6 iterations), and MWM (length of VOQ as the weight measure). 
Note that we have also investigated how the performance of QPS-r scales with respect to $r$, the results are reported in~\ref{app:diff-r}. 
\autoref{fig: delay-load-bench}
shows their mean delays (in number of time slots)
under the  aforementioned four traffic patterns respectively.
Each subfigure shows how the mean delay (on a {\it log scale} along the y-axis)
varies with the offered load (along the x-axis).
We  make three observations from~\autoref{fig: delay-load-bench}.
First, \autoref{fig: delay-load-bench} clearly shows that, when the offered
load is no larger than $0.5$, QPS-1 has low average delays ({\it i.e.},
more than just being stable) that are close to those of iSLIP and MWM,
under all four traffic patterns.
Second, the maximum sustainable throughputs (where the delays start
to ``go through the roof'' in the subfigures)
of QPS-1 are roughly $0.634, 0.645, 0.681$, and $0.751$ respectively,
under the four traffic patterns respectively;  they are all comfortably
larger than the $50\%$ provable lower bound.
Third, the throughput and delay performances of QPS-3 and iSLIP are comparable:
The former has slightly better delay performances than the latter under all four traffic patterns except the uniform.

\subsubsection{Bursty Arrivals}\label{subsubsec: bursty-arrivals-bench}

In real networks, packet arrivals are likely to be bursty.
In this section, we evaluate the performances of QPS, iSLIP and MWM
under bursty traffic, generated by
a two-state ON-OFF arrival
process, another special case of Markovian arrivals, described in~\cite{Neely2008DelayMM}.
The durations of each ON (burst) stage and OFF (no burst) stage are
geometrically distributed: the probabilities that the ON and OFF states last for $t \ge 0$ time slots are given by
\begin{equation*}
P_{ON}(t) = p(1-p)^t \text{ and } P_{OFF}(t) = q(1-q)^t,
\end{equation*}
with the parameters $p, q \in (0,1)$ respectively. As such, the average duration of the
ON and OFF states are $(1-p)/p$ and $(1-q)/q$ time slots
respectively.

In an OFF state, an incoming packet's destination (\ie output
port) is generated according to the corresponding load matrix. In an
ON state, all incoming packet arrivals to an input port would be destined
to the same output port, thus simulating a burst of packet
arrivals. By controlling $p$, we can control the desired
average burst size while by adjusting $q$,
we can control the load of the traffic.

We have evaluated the mean delay
performances of QPS-3, iSLIP and MWM, with the average burst size
ranging from $16$ to \num{1024} packets. 
We have simulated various offered loads, but here we only present the results, shown in~\autoref{fig: delay-bs-bench}, 
under an offered load of $0.75$; other offered loads lead to similar conclusions. 
\autoref{fig: delay-bs-bench} clearly shows that QPS-3 outperforms iSLIP (under all traffic patterns except the uniform), 
by an increasingly wider margin in both 
absolute and relative terms as the average burst size becomes larger. 

\begin{figure}
\centering
\includegraphics[width=\textwidth]{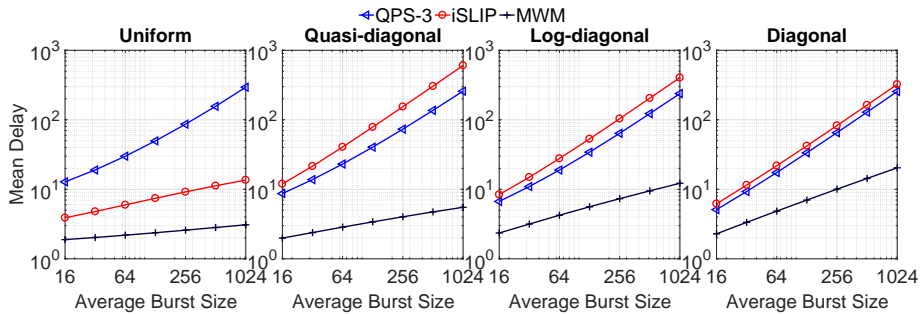}
\caption{Mean delays of QPS-3 against iSLIP and MWM under bursty traffic.}\label{fig: delay-bs-bench}
\end{figure}

\subsection{How Mean Delay Scales with N}
\label{subsec: delay-vs-port}

\autoref{fig: delay-ports-bench} shows how the mean delays of QPS-3, iSLIP (running $\log_2 N$ iterations given any $N$), and MWM scale with the number
of input/output ports $N$, under the four different traffic patterns.
We have simulated the following different  values of $N$: $N\!=\!8, 16, 32, 64, 128, 256, 512$. 
In all these plots, the offered load is $0.75$ (Like in~\autoref{subsubsec: bursty-arrivals-bench}, we have also simulated other 
offered loads. They are omitted here, because they lead to similar conclusions), which is quite high compared to the maximum sustainable throughputs of QPS-3 and iSLIP
(shown in~\autoref{fig: delay-load-bench}) under these four traffic patterns.
\autoref{fig: delay-ports-bench} shows that the mean delays of QPS-3 are slightly lower ({\it i.e.}, better)
than those of iSLIP under all traffic patterns except the uniform.
In addition, the mean delay curves of QPS-3 remain almost flat ({\it i.e.}, constant) under
log-diagonal and diagonal traffic patterns.
Although they increase with $N$ under uniform and quasi-diagonal traffic patterns, they
eventually almost flatten out when $N$ gets larger (say when $N \ge 128$).
These delay curves show that QPS-3, which runs only 3 iterations, deliver slightly better
delay performances, under all traffic patterns except the uniform, than iSLIP (a refined and
optimized parallel maximal matching algorithm adapted for switching), which
runs $\log_2 N$ iterations with each iteration has $O(\log_2 N)$ computational complexity.


\begin{figure}[!ht]
\centering
\includegraphics[width=\textwidth]{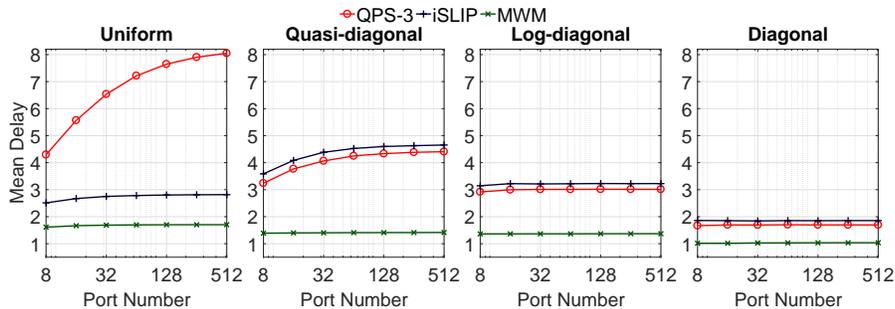}
\caption{Mean delays of QPS-3, iSLIP, and MWM scaling with port number $N$.}\label{fig: delay-ports-bench}
\end{figure}

\section{Related Work}\label{sec: related-work}


Scheduling in crossbar switches is a well-studied problem with a large amount of literature. So, in this section, we provide only a brief survey of prior work that is
directly related to ours, 
focusing on those 
we have not described earlier.

\medskip
\noindent
{\bf Iterative algorithms that compute maximal matchings.}
As mentioned earlier, maximal matchings have long been recognized as a cost-effective family in switching. 
Among various types of algorithms that compute maximal matchings, the family of parallel 
iterative algorithms~\cite{Hu_HighestRankFirst_2018,Hu2016IterSched,YihanLi_drr_2001,McKeown_iLQF_1995,Scicchitano_ssr_2007,Lin_SelectiveRequestRound_2011} 
is widely adopted. 
Parallel iterative algorithms 
compute a maximal matching via multiple iterations of message exchanges between the input and 
output ports. Generally, each iteration contains three stages: request, grant, and accept. 
In the request stage, each input port sends requests to output ports. In the grant stage, 
each output port, upon receiving requests from multiple input ports, grants to one. 
Finally, in the accept stage, each input port, upon receiving grants from multiple output ports, 
accepts one. Unfortunately,  all these 
parallel iterative algorithms 
in switching require up to $N$ iterations to guarantee that the resulting matching is a maximal 
matching. In other words, they need up to $N$ iterations to achieve the same provable 
throughput and delay performance guarantees as QPS-1 (running $1$ iteration). 
\medskip
\noindent
{\bf Other algorithms that have performance guarantees.}
Several serial randomized algorithms, starting with TASS~\cite{Tassiulas1998} and culminating
in SERENA~\cite{GiacconePrabhakarShah2003SERENA}, have been proposed
that have a total computational complexity of only $O(N)$ yet can provably attain $100\%$ throughput;
SERENA, the best among them, also delivers a good empirical delay performance.
However, this $O(N)$ complexity is still too high for scheduling high-line-rate high-radix switches,
and none of them has been successfully parallelized ({\it i.e.}, converted to a parallel iterative algorithm)
yet, except that SERENA was recently parallelized in~\cite{Gong_SERENADEParallelIterative_2020} that 
reduces the $O(N)$ computational complexity to $O(\log N)$ per port, 
which is still quite high for high-line-rate high-radix switches. 

In~\cite{Ye10Variable},
a crossbar scheduling algorithm specialized for switching variable-length packets was
proposed, that has $O(1)$ total computational complexity. 
Although this algorithm can provably attain $100\%$ throughput, its delay performance is poor.
For example, as shown in~\cite{Gong2017QPS},
its average delays, under the aforementioned four
standard traffic matrices, are roughly
$3$ orders of magnitudes higher than those of
SERENA~\cite{GiacconePrabhakarShah2003SERENA} even under a moderate offered load of $0.6$.

\section{Conclusion}\label{sec:conclusion}


%
In this work, we propose QPS-r, a parallel iterative switching algorithm
with $O(1)$ computational complexity per port. We prove, through Lyapunov stability analysis,
that it achieves the same throughput and delay guarantees in theory,
and demonstrate through simulations that it has comparable performances in practice
as the family of maximal matching algorithms (adapted for switching);  maximal matching algorithms
are much more expensive computationally (at least $O(\log N)$ iterations
and a total of $O(\log^2 N)$ per-port computational
complexity). These salient properties make QPS-r an excellent candidate
algorithm that is fast enough computationally and can
deliver acceptable throughput and delay performances for high-link-rate high-radix switches.


\bibliographystyle{elsarticle-num}
\bibliography{bibs/QPS,bibs/added_during_revision}

\appendix
\section{Irreducibility of \texorpdfstring{$\{Q(t)\}_{t=0}^{\infty}$}{Q(t)}}\label{app:proof-markov-for-qps-1}


We have shown that the queueing process $\{Q(t)\}_{t=0}^{\infty}$ is a Markov chain in the proof of~\autoref{coro:stability-qps}.
Now we prove that this Markov is irreducible using the same approach as used in~\cite{Gong2017QPS}. 
Here we provide only a proof sketch. 
We show that the queueing process $Q(t)$, starting from any 
state ({\it i.e.,} VOQ lengths) it is currently in, will with a positive probability return to the 
$\mathbf{0}$ state, in which all VOQs have a length of $0$, in a finite number of steps ({\it i.e.,} time slots). 
To show this property, we claim that, for any integer $\tau > 0$, the switch could, with a nonzero probability,
have no packet arrivals to any of its VOQs during time slots $t, t+1, \cdots, t + \tau$.  
This claim is true because, the arrival process $A(t)$ is {\it i.i.d.}, and for any $1\le i\le N$ and $1\le j\le N$, we have 
$\beta_{ij} \triangleq P[a_{ij}(t) = 0] > 0$ (since the maximum load factor $\rho <1/2$).  
Hence, when there are no packet arrivals during time slots $t, t+1, \cdots, t + \tau$, which happens with 
a nonzero probability, QPS-1 can clear all the queues during this period, with a sufficiently large $\tau$, and return the queueing process $Q(t)$ to the $\mathbf{0}$ state. 
Therefore, the Markov chain is irreducible.  

\section{Proof of \autoref{thm:stability-tool-map}} \label{app:proof-stability}

In this section, we prove \autoref{thm:stability-tool-map}. The proof follows the same outline as the proof of
\autoref{thm:stability-tool} with some necessary modifications to allow for Markovian arrivals. 
Note that we prove only (I) here, since \autoref{thm:delay-tool-map} that we will shortly prove
implies (II).  

For notational convenience, we define $Z(t)\!\triangleq\!\big{(}Q(t),X(t)\big{)}$. 
Unlike in the proof of~\autoref{thm:stability-tool} where we use the 1-step Foster-Lyapunov stability criterion~\cite[Proposition 2.1.1]{brucelecture2006}, 
here we use the $n$-step (multiple-step) Foster-Lyapunov stability criterion~\cite[Theorem 2.2.4]{fayolle1995topics}. 
We consider again the following Lyapunov function: $L\big{(}Z(t)\big{)}\!=\!\sum_{i,j}q_{ij}(t)Q^\dagger_{ij}(t)$, and 
show that its $n$-step drift, which is defined as $\mathbf{E}\big{[}L\big{(}Z(t+n)\big{)} - L\big{(}Z(t)\big{)} \mid Z(t)\big{]}$, is negative except in a finite set.  

To compute the $n$-step drift, we first compute,
\begin{equation}\label{eq:n-step-drfit-inside}
L\big{(}Z(t+n)\big{)} - L\big{(}Z(t)\big{)}=\sum_{i,j}q_{ij}(t+n)Q^\dagger_{ij}(t+n) - \sum_{i,j}q_{ij}(t)Q^\dagger_{ij}(t). 
\end{equation}

Using~\labelcref{eq:queueing-dynamics-entrywise,eq:dagger-def}, and definitions of $D^\dagger_{ij}(t)$ and 
$A^\dagger_{ij}(t)$, we have,
\begin{equation}\label{eq:n-step-queue-evo}
q_{ij}(t+n) = q_{ij}(t) - \sum_{k=0}^{n-1}d_{ij}(t+k) + \sum_{k=0}^{n-1}a_{ij}(t+k),
\end{equation}
and
\begin{equation}\label{eq:n-step-queue-evo-dagger}
Q^\dagger_{ij}(t+n) = Q^\dagger_{ij}(t) - \sum_{k=0}^{n-1}D^\dagger_{ij}(t+k) + \sum_{k=0}^{n-1}A^\dagger_{ij}(t+k).
\end{equation}

Thus, the term $q_{ij}(t+n)Q^\dagger_{ij}(t+n)$ in the first term on the RHS of~\autoref{eq:n-step-drfit-inside} can be written as
\begin{align}
&q_{ij}(t+n)Q^\dagger_{ij}(t+n) \notag\\
=& q_{ij}(t)Q^\dagger_{ij}(t)+q_{ij}(t)\big{(}\sum_{k=0}^{n-1}A^\dagger_{ij}(t+k)-\sum_{k=0}^{n-1}D^\dagger_{ij}(t+k)\big{)} \notag \\
&{}+ Q^\dagger_{ij}(t)\big{(}\sum_{k=0}^{n-1}a_{ij}(t+k)-\sum_{k=0}^{n-1}d_{ij}(t+k)\big{)} \notag\\
&{}+ 
\big{(}\sum_{k=0}^{n-1}a_{ij}(t+k)-\sum_{k=0}^{n-1}d_{ij}(t+k)\big{)} \big{(}\sum_{k=0}^{n-1}A^\dagger_{ij}(t+k)-\sum_{k=0}^{n-1}D^\dagger_{ij}(t+k)\big{)}.  \label{eq:qnprod-to-qprod}
\end{align}

Substituting \autoref{eq:qnprod-to-qprod} into ~\autoref{eq:n-step-drfit-inside}, we have
\begin{align}
&L\big{(}Z(t+n)\big{)} - L\big{(}Z(t)\big{)} \notag \\
=& \sum_{i,j}q_{ij}(t)\big{(}\sum_{k=0}^{n-1}A^\dagger_{ij}(t+k)-\sum_{k=0}^{n-1}D^\dagger_{ij}(t+k)\big{)} \notag \\
&{}+ \sum_{i,j}Q^\dagger_{ij}(t)\big{(}\sum_{k=0}^{n-1}a_{ij}(t+k)-\sum_{k=0}^{n-1}d_{ij}(t+k)\big{)} \notag\\
&{}+ 
\sum_{i,j}\big{(}\sum_{k=0}^{n-1}a_{ij}(t+k)-\sum_{k=0}^{n-1}d_{ij}(t+k)\big{)} \big{(}\sum_{k=0}^{n-1}A^\dagger_{ij}(t+k)-\sum_{k=0}^{n-1}D^\dagger_{ij}(t+k)\big{)}. \label{eq:n-step-drfit-inside-2}
\end{align}

Focusing on the second term above and using \autoref{eq:dagger-def}, we have
\begingroup
\allowdisplaybreaks
\begin{align}
&\sum_{i,j}Q^\dagger_{ij}(t)\big{(}\sum_{k=0}^{n-1}a_{ij}(t+k)-\sum_{k=0}^{n-1}d_{ij}(t+k)\big{)} \notag \\
=&\sum_{i,j}\sum_{(l,w)\in Q^\dagger_{ij}}q_{lw}(t)\big{(}\sum_{k=0}^{n-1}a_{ij}(t+k)-\sum_{k=0}^{n-1}d_{ij}(t+k)\big{)}  \notag\\
=&\sum_{l,w}\sum_{(i,j)\in Q^\dagger_{lw}}q_{lw}(t)\big{(}\sum_{k=0}^{n-1}a_{ij}(t+k)-\sum_{k=0}^{n-1}d_{ij}(t+k)\big{)}  \notag\\
=&\sum_{l,w}q_{lw}\big{(}\sum_{(i,j)\in Q^\dagger_{lw}}\sum_{k=0}^{n-1}a_{ij}(t+k)-\sum_{(i,j)\in Q^\dagger_{lw}}\sum_{k=0}^{n-1}d_{ij}(t+k)\big{)} \notag \\
=&\sum_{l,w}q_{lw}\big{(}\sum_{k=0}^{n-1}\sum_{(i,j)\in Q^\dagger_{lw}}a_{ij}(t+k)-\sum_{k=0}^{n-1}\sum_{(i,j)\in Q^\dagger_{lw}}d_{ij}(t+k)\big{)} \notag \\
=&\sum_{l,w}q_{lw}\big{(}\sum_{k=0}^{n-1}A^\dagger_{lw}(t+k)-\sum_{k=0}^{n-1}D^\dagger_{lw}(t+k)\big{)} \notag \\
=&\sum_{i,j}q_{ij}(t)\big{(}\sum_{k=0}^{n-1}A^\dagger_{ij}(t+k)-\sum_{k=0}^{n-1}D^\dagger_{ij}(t+k)\big{)}.
\end{align}
\endgroup

Hence, the drift can be written as
\begin{align}
&\mathbf{E}\big{[}L\big{(}Z(t+n)\big{)} - L\big{(}Z(t)\big{)} \mid Z(t)\big{]}\nonumber\\
=&\mathbf{E}\Big{[}2\sum_{i,j}q_{ij}(t)\big{(}\sum_{k=0}^{n-1}A^\dagger_{ij}(t+k)-\sum_{k=0}^{n-1}D^\dagger_{ij}(t+k)\big{)} \mid Z(t)\Big{]} \notag \\
&{}\!+\!\mathbf{E}\Big{[}\sum_{i,j}\big{(}\sum_{k=0}^{n-1}a_{ij}(t\!+\!k)\!-\!\sum_{k=0}^{n-1}d_{ij}(t\!+\!k)\big{)} \big{(}\sum_{k=0}^{n-1}A^\dagger_{ij}(t\!+\!k)\!-\!\sum_{k=0}^{n-1}D^\dagger_{ij}(t\!+\!k)\big{)}\!\mid\! Z(t)\Big{]}. \label{eq:raw-drift-map}
\end{align}


Since $0\le a_{ij}(t)\le a_{max}$ and $0\le d_{ij}(t)\le 1$, for any $i,j$ and $t$, we have 
\begin{align}
&\mathbf{E}\Big{[}\sum_{i,j}\big{(}\sum_{k=0}^{n-1}a_{ij}(t\!+\!k)\!-\!\sum_{k=0}^{n-1}d_{ij}(t\!+\!k)\big{)} \big{(}\sum_{k=0}^{n-1}A^\dagger_{ij}(t\!+\!k)\!-\!\sum_{k=0}^{n-1}D^\dagger_{ij}(t\!+\!k)\big{)}\!\mid\! Z(t)\Big{]} \notag \\
\le&\mathbf{E}\Big{[}\sum_{i,j}\big{(}\sum_{k=0}^{n-1}a_{ij}(t\!+\!k)\!+\!\sum_{k=0}^{n-1}d_{ij}(t\!+\!k)\big{)} \big{(}\sum_{k=0}^{n-1}A^\dagger_{ij}(t\!+\!k)\!+\!\sum_{k=0}^{n-1}D^\dagger_{ij}(t\!+\!k)\big{)}\!\mid\! Z(t)\Big{]} \notag \\
\le&\mathbf{E}\Big{[}\sum_{i,j}\big{(}na_{max}+n\big{)} \big{(}(2N-1)na_{max}+(2N-1)n\big{)}\!\mid\! Z(t)\Big{]} \notag \\
 =& C_1(n). \label{eq:term-2-tbp-map}
\end{align}
where $C_1(n)\triangleq (2N-1)n^2N^2(a_{max}+1)^2$.

Now we claim there exists some $n>K_\rho$ (where $K_\rho$ is defined in~\autoref{fact:maximum-load-bounded}) such that the following inequality holds, which we will prove shortly.
\begin{align}
&\mathbf{E}\Big{[}2\sum_{i,j}q_{ij}(t)\big{(}\sum_{k=0}^{n-1}A^\dagger_{ij}(t+k)-\sum_{k=0}^{n-1}D^\dagger_{ij}(t+k)\big{)} \mid Z(t)\Big{]}\notag \\
\le& 2n\Big{(}\xi_\rho - 1 + \frac{(2N-1)K_\rho a_{max}-K_\rho\xi_\rho}{n}\Big{)}\|Q(t)\|_1 + C_2(n), \label{eq:term-1-tbp-map}
\end{align}
where $K_\rho$ is defined in~\autoref{fact:maximum-load-bounded},  
$\xi_\rho\triangleq \max\{\xi(k): k\in\{K_\rho,K_\rho+1,\cdots,n-1\}\}$ where $\xi(\cdot)$ is defined~\autoref{eq:load-bound} and $C_2(n)\triangleq (2a_{max} +1)(n-1)n N^2$.

Define $n_0\triangleq \min\{n\in \mathbb{N}_+\mbox{ and } n > K_\rho: \xi_\rho - 1 + \frac{(2N-1)K_\rho a_{max}-K_\rho\xi_\rho}{n} < 0\}$. Such 
an $n_0$ should exist because $\xi_\rho<1$ and $\frac{(2N-1)K_\rho a_{max}-K_\rho\xi_\rho}{n}\rightarrow 0$ as $n\rightarrow \infty$. 
Define $\varepsilon(n_0) \triangleq -2n_0\big{(}\xi_\rho - 1 + \frac{(2N-1)K_\rho a_{max}-K_\rho\xi_\rho}{n_0}\big{)}$, clearly $\varepsilon(n_0)>0$. 
With~\autoref{eq:term-2-tbp-map} and~\autoref{eq:term-1-tbp-map} substituted into~\autoref{eq:raw-drift-map}, we have

\begin{align}
&\mathbf{E}\big{[}L\big{(}Z(t+n_0)\big{)} - L\big{(}Z(t)\big{)} \mid Z(t)\big{]} \notag\\
\leq& -\varepsilon(n_0) \|Q(t)\|_1 + C_1(n_0) + C_2(n_0) \notag \\
\le& -\varepsilon(n_0) (\|Z(t)\|_1 - \chi N^2) + C_1(n_0) + C_2(n_0), \label{eq:x-t-1-norm} 
\end{align}
where $\chi\triangleq\max_{i,j}\{\chi_{ij}\}$. 

Inequality~\autoref{eq:x-t-1-norm} is 
due to $\varepsilon(n_0) > 0$ and $x_{ij}(t)\le \chi_{ij}$ (since each $\{x_{ij}(t)\}_{t=0}^\infty$ is a DTMC on a finite-state space of $\{1,2,\cdots,\chi_{ij}\}$). Because $\varepsilon(n_0)>0$, there exist $B,\epsilon >0$ such that, whenever $\|Z(t)\|_1 \!>\! B$,
\begin{equation*}
\mathbf{E}\big{[}L\big{(}Z(t+n_0)\big{)} - L\big{(}Z(t)\big{)} \mid Z(t)\big{]} \leq -\epsilon.
\end{equation*}

So the theorem (\autoref{thm:stability-tool-map}) follows using $n_0$-step Foster-Lyapunov criterion~\cite[Theorem 2.2.4]{fayolle1995topics}.

Now we proceed to prove~\autoref{eq:term-1-tbp-map}. By simple calculations, we have
\begin{align}
&\mathbf{E}\Big{[}2\sum_{i,j}q_{ij}(t)\big{(}\sum_{k=0}^{n-1}A^\dagger_{ij}(t+k)-\sum_{k=0}^{n-1}D^\dagger_{ij}(t+k)\big{)} \mid Z(t)\Big{]}\notag \\
=&2\Big{(}\sum_{i,j}\mathbf{E}\Big{[}q_{ij}(t)\sum_{k=0}^{n-1}A^\dagger_{ij}(t+k)\mid Z(t) \Big{]}\!-\!\sum_{i,j}\mathbf{E}\Big{[}q_{ij}(t)\sum_{k=0}^{n-1}D^\dagger_{ij}(t+k)\mid Z(t)\Big{]}\Big{)}. \label{eq:bounded-first-term-map}
\end{align}

Focusing on the first term in the parentheses above, for the conditional expectation $\mathbf{E}\Big{[}q_{ij}(t)\sum_{k=0}^{n-1}A^\dagger_{ij}(t+k)\mid Z(t) \Big{]}$, we have

\begin{align}
&\mathbf{E}\Big{[}q_{ij}(t)\sum_{k=0}^{n-1}A^\dagger_{ij}(t+k)\mid Z(t) \Big{]} \notag \\
=&\mathbf{E}\Big{[}q_{ij}(t)\big{(}\sum_{k=0}^{K_\rho-1}A^\dagger_{ij}(t+k) + \sum_{k=K_\rho}^{n-1}A^\dagger_{ij}(t+k)\big{)}\mid Z(t) \Big{]} \notag \\
=&\mathbf{E}\Big{[}q_{ij}(t)\sum_{k=0}^{K_\rho-1}A^\dagger_{ij}(t+k) \mid Z(t)\Big{]}  + \mathbf{E}\Big{[}q_{ij}(t)\sum_{k=K_\rho}^{n-1}A^\dagger_{ij}(t+k)\big{)}\mid Z(t) \Big{]}. \label{eq:temp-1} 
\end{align}

Since $0\le a_{ij}(t)\le a_{max}$ for any $i,j$ and $t$, we have the first expectation above, $\mathbf{E}\Big{[}q_{ij}(t)\sum_{k=0}^{K_\rho-1}A^\dagger_{ij}(t+k) \mid Z(t)\Big{]}$, is bounded by 
\begin{equation}\label{eq:temp-1-bound-1}
\mathbf{E}\Big{[}q_{ij}(t)\sum_{k=0}^{K_\rho-1}A^\dagger_{ij}(t+k) \mid Z(t)\Big{]} \le (2N-1)K_\rho a_{max}q_{ij}(t).
\end{equation}

Using~\autoref{eq:load-bound},~\autoref{fact:maximum-load-bounded},~and~\autoref{fact:dependent-arrival-rate-map}, 
we have, for any $k\ge K_\rho$,
\begin{align*}
&\mathbf{E}\Big{[}q_{ij}(t)A^\dagger_{ij}(t+k) \mid Z(t) \Big{]} \notag \\
=&q_{ij}(t)\mathbf{E}\Big{[}A^\dagger_{ij}(t+k) \mid Z(t) \Big{]} \notag \\
\le&q_{ij}(t)\big{(}\Lambda_{ij}^\dagger + \sum_{(l,w)\in Q^\dagger_{ij}}\zeta_{lw}(k)\big{)} \notag \\
\le&\xi(k)q_{ij}(t) \notag \\
\le& \xi_\rho q_{ij}(t).
\end{align*}

Therefore, the second expectation on the RHS of~\autoref{eq:temp-1}, $\mathbf{E}\Big{[}q_{ij}(t)\sum_{k=K_\rho}^{n-1}A^\dagger_{ij}(t+k)\big{)}\mid Z(t) \Big{]}$, can be bounded by
\begin{equation}\label{eq:temp-1-bound-2}
\mathbf{E}\Big{[}q_{ij}(t)\sum_{k=K_\rho}^{n-1}A^\dagger_{ij}(t+k)\big{)}\mid Z(t) \Big{]} \le (n-K_\rho)\xi_\rho q_{ij}(t).
\end{equation}

Now, we proceed to bound the second term in the parentheses on the RHS of~\autoref{eq:bounded-first-term-map}, 
$\sum_{i,j}\mathbf{E}\Big{[}q_{ij}(t)\sum_{k=0}^{n-1}D^\dagger_{ij}(t+k)\mid Z(t)\Big{]}$. 

Using~\autoref{eq:queueing-dynamics-entrywise}, we have $q_{ij}(t+k)=q_{ij}(t)-\sum_{\tau=0}^{k-1}d_{ij}(t+\tau)+\sum_{\tau=0}^{k-1}a_{ij}(t+\tau)$ for any integer $k\ge 0$. 
Because $d_{ij}(t)\ge 0$ and $a_{ij}(t)\le a_{max}$ for any $i,j$ and $t$, 
$q_{ij}(t)=q_{ij}(t+k)+\sum_{\tau=0}^{k-1}d_{ij}(t+\tau)-\sum_{\tau=0}^{k-1}a_{ij}(t+\tau)\ge q_{ij}(t+k)-ka_{max}$ for any integer $k\ge 0$. Thus, for any integer $k\ge 0$, we have 
\begingroup
\allowdisplaybreaks
\begin{align}
&\sum_{i,j}\mathbf{E}\Big{[}q_{ij}(t)D^\dagger_{ij}(t+k)\mid Z(t)\Big{]} \notag \\
\ge&\sum_{i,j}\mathbf{E}\Big{[}\big{(}q_{ij}(t+k)-ka_{max}\big{)}D^\dagger_{ij}(t+k)\mid Z(t)\Big{]} \notag \\
=&\sum_{i,j}\Bigg{(}\mathbf{E}\Big{[}q_{ij}(t+k)D^\dagger_{ij}(t+k)\mid Z(t)\Big{]} - \mathbf{E}\Big{[}k a_{max}D^\dagger_{ij}(t+k)\mid Z(t)\Big{]}\Bigg{)}  \notag \\
\ge&\sum_{i,j}\Bigg{(}\mathbf{E}\Big{[}q_{ij}(t+k)D^\dagger_{ij}(t+k)\mid Z(t)\Big{]} -2ka_{max}\Bigg{)}  \label{eq:qd-bound-1} \\
=&\sum_{i,j}\mathbf{E}\Big{[}q_{ij}(t+k)D^\dagger_{ij}(t+k)\mid Z(t)\Big{]} - 2ka_{max}N^2 \notag \\
=&\sum_{i,j}\mathbf{E}\Bigg{[}\mathbf{E}\Big{[}\big{(}q_{ij}(t+k)D^\dagger_{ij}(t+k) \mid Q(t+k)\Big{]}\mid Z(t)\Bigg{]} - 2ka_{max}N^2 \notag \\
=&\mathbf{E}\Bigg{[}\sum_{i,j}\mathbf{E}\Big{[}q_{ij}(t+k)D^\dagger_{ij}(t+k) \mid Q(t+k)\Big{]}\mid Z(t)\Bigg{]} - 2ka_{max}N^2 \notag \\
\ge&\mathbf{E}\Big{[}\sum_{i,j}q_{ij}(t+k)\mid Z(t) \Big{]}- 2ka_{max}N^2 \label{eq:qd-bound-2} \\
=&\mathbf{E}\Big{[}\sum_{i,j}\big{(}q_{ij}(t)-\sum_{\tau=0}^{k-1}d_{ij}(t+\tau)+\sum_{\tau=0}^{k-1}a_{ij}(t+\tau)\big{)}\mid Z(t) \Big{]}- 2ka_{max}N^2 \notag \\
\ge& \mathbf{E}\Big{[}\sum_{i,j}\big{(}q_{ij}(t) -k\big{)}\mid Z(t) \Big{]} - 2ka_{max}N^2 \label{eq:qd-bound-3} \\
=&\sum_{i,j}\big{(}q_{ij}(t) -k\big{)} - 2ka_{max}N^2 \notag \\
=&\|Q(t)\|_1 - (2a_{max} +1)kN^2. \label{eq:qd-bound-before-final}
\end{align}
\endgroup 

Inequality~\autoref{eq:qd-bound-1} is due to $D^\dagger_{ij}(t)\le 2$ for any $i,j$ and $t$. 
Inequality~\autoref{eq:qd-bound-2} is due to \autoref{eq:QPS-property} and 
inequality~\autoref{eq:qd-bound-3} is due to $d_{ij}(t)\le 1$ and $a_{ij}(t)\ge 0$ for any $i,j$ and $t$.

Using~\autoref{eq:qd-bound-before-final}, we have
\begingroup
\allowdisplaybreaks
\begin{align}
&\sum_{i,j}\mathbf{E}\Big{[}q_{ij}(t)\sum_{k=0}^{n-1}D^\dagger_{ij}(t+k)\mid Z(t)\Big{]} \notag \\
=&\sum_{k=0}^{n-1}\sum_{i,j}\mathbf{E}\Big{[}q_{ij}(t)D^\dagger_{ij}(t+k)\mid Z(t)\Big{]} \notag \\
\ge&\sum_{k=0}^{n-1}\big{(} \|Q(t)\|_1 - (2a_{max} +1)kN^2\big{)} \notag \\
=&n \|Q(t)\|_1 - (2a_{max} +1)(n-1)n N^2 / 2 \label{eq:qd-bound-final}
\end{align}
\endgroup 

Now, using \labelcref{eq:bounded-first-term-map,eq:temp-1} and inequalities~\labelcref{eq:temp-1-bound-1,eq:temp-1-bound-2,eq:qd-bound-final}, we have
\begin{align}
&\mathbf{E}\Big{[}2\sum_{i,j}q_{ij}(t)\big{(}\sum_{k=0}^{n-1}A^\dagger_{ij}(t+k)-\sum_{k=0}^{n-1}D^\dagger_{ij}(t+k)\big{)} \mid Z(t)\Big{]}\notag \\
\le&2\Bigg{(}(2N-1)K_\rho a_{max}\|Q(t)\|_1 + (n-K_\rho)\xi_\rho\|Q(t)\|_1 \notag \\
&{}- \Big{(}n \|Q(t)\|_1 - (2a_{max} +1)(n-1)n N^2 / 2\Big{)}\Bigg{)} \notag \\
=&2n\Big{(}\xi_\rho - 1 + \frac{(2N-1)K_\rho a_{max}-K_\rho\xi_\rho}{n}\Big{)}\|Q(t)\|_1 + C_2(n),
\end{align}
where $C_2(n)=(2a_{max} +1)(n-1)n N^2$.

\section{\autoref{thm:stability-tool-map} Applies to QPS-1}\label{app:proof-markov-map-for-qps-1}

In this section, we prove that \autoref{thm:stability-tool-map} applies to QPS-1. More precisely, 
under Markovian arrivals whenever the maximum load factor $\rho < 1/2$, QPS-1 is stable in the following sense: 
The resulting joint process $\{(Q(t),X(t)\}_{t=0}^{\infty}$ 
is a positive recurrent Markov chain and the stationary distribution $\bar{Q}$ of the 
queueing process $\{(Q(t)\}_{t=0}^{\infty}$ has finite first moment. 

The joint process $\{(Q(t),X(t)\}_{t=0}^{\infty}$ is a Markov chain, by the following two facts.
First, $X(t)$ is a Markov chain so $X(t)$ only depends on $X(t-1)$. Second,
by~\autoref{eq:queueing-dynamics-entrywise}, $Q(t)$ depends on only $Q(t-1), A(t-1)$ and
$D(t-1)$, where $A(t-1)$ is a function of only $X(t-1)$ and $D(t-1)$ is a function of only
$Q(t-1)$.  

The reasoning for the irreducibility of this Markov chain $\{(Q(t),X(t)\}_{t=0}^{\infty}$ 
is almost identical to those in~\ref{app:proof-markov-for-qps-1}, so we omit it here for brevity.


\section{Proof of \autoref{thm:delay-tool-map}}\label{app:proof-delay}

\noindent
{\bf Part (I).} 
Now we prove (I) in \autoref{thm:delay-tool-map}. 
Like in~\ref{app:proof-stability}, we let $Z(t)\triangleq\left(Q(t),X(t)\right)$. 
Like in the proof of~\autoref{thm:delay-tool}, we replace function $V$ in~\autoref{thm:moment-bound} by $L$, the Lyapunov function used in the proof 
of~\autoref{thm:stability-tool-map} and $Y_t$ by $Z(t)$.
We define $f$ and $g$ functions 
in the same way as in the proof of~\autoref{thm:delay-tool}. 
First, we show that the $Y_t$, $V$, $f$, and $g$ defined above satisfy~\autoref{eq:mbreq}.

As derived in~\autoref{subsec:throughput-analysis}, the LHS of~\autoref{eq:mbreq} can be written as
\begingroup
\allowdisplaybreaks
\begin{align*}
&V\big{(}Y_{t+1}\big{)} - V\big{(}Y_t\big{)} \nonumber\\
=&L\big{(}Z(t+1)\big{)} - L\big{(}Z(t)\big{)} \\
=&2\sum_{i,j}q_{ij}(t)\big{(}A^\dagger_{ij}(t)-D^\dagger_{ij}(t)\big{)} 
+\sum_{i,j}\big{(}a_{ij}(t)\!-\!d_{ij}(t)\big{)}\big{(}A^\dagger_{ij}(t)\!-\!D^\dagger_{ij}(t)\big{)} \\
=& -\Big{(}-2\sum_{i,j}q_{ij}(t)\big{(}A^\dagger_{ij}(t)-D^\dagger_{ij}(t)\big{)} + h(Y_t)\Big{)} \\
&{}+\Big{(}\sum_{i,j}\big{(}a_{ij}(t)\!-\!d_{ij}(t)\big{)}\big{(}A^\dagger_{ij}(t)\!-\!D^\dagger_{ij}(t)\big{)} + h(Y_t) \Big{)} \\
=&-f(Y_t) + g(Y_t).
\end{align*}
\endgroup
Therefore, \autoref{eq:mbreq} is satisfied. 

Now we claim the following two inequalities that hold in steady state of the positive recurrent Markov chain $\{Z(t)\}_{t=0}^\infty$, which we will prove shortly.
\begin{align}
&\mathbf{E}\big{[}2{\textstyle \sum\limits_{i,j}}q_{ij}(t)\big{(}A^\dagger_{ij}(t)\!-\!D^\dagger_{ij}(t)\big{)}\big{]} \notag \\
\le& 2\Bigg{(}(\xi(K_\rho) \!-\! 1)\mathbf{E}[\|\bar{Q}\|_1] \!+\!{\textstyle\sum\limits_{i,j}}\mathbf{E}\Big{[}{\textstyle \sum\limits_{k=0}^{K_\rho-1}}a_{ij}(k-K_\rho)A^\dagger_{ij}(0)\Big{]}\Bigg{)}, \label{eq:term-1-tbp-map-delay}
\end{align}
\begin{equation}\label{eq:term-2-tbp-map-delay}
\mathbf{E}\big{[}{\textstyle \sum\limits_{i,j}}\big{(}a_{ij}(t)\!-\!d_{ij}(t)\big{)}\big{(}A^\dagger_{ij}(t)\!-\!D^\dagger_{ij}(t)\big{)}\big{]} \leq \sum_{i,j}\left(\mathbf{E}[a_{ij}(0)A^\dagger_{ij}(0)] +\lambda_{ij}\right),
\end{equation}
where $\xi(\cdot)$ and $K_\rho$ are defined in~\autoref{eq:load-bound} and~\autoref{fact:maximum-load-bounded} respectively.

Then, using similar reasoning as in the proof of~\autoref{thm:delay-tool}, we have
\begingroup
\allowdisplaybreaks
\begin{align*}
&-2\left((\xi(K_\rho)-1)\mathbf{E}[\|\bar{Q}\|_1]+\sum_{i,j}\mathbf{E}\left[\sum_{k=0}^{K_\rho-1}a_{ij}(k-K_\rho)A^\dagger_{ij}(0)\right]\right)\nonumber\\
\le&\mathbf{E}[f(\bar{Y}) - h(\bar{Y})] \\
\le&\mathbf{E}[g(\bar{Y}) - h(\bar{Y})] \\
\le&\sum_{i,j}\left(\mathbf{E}[a_{ij}(0)A^\dagger_{ij}(0)] +\lambda_{ij}\right). 
\end{align*}
\endgroup

Therefore, we have, in steady state,
\begin{align*}
\mathbf{E}[\|\bar{Q}\|_1]&\leq\frac{1}{2(1-\xi(K_\rho))}\left(\sum_{i,j}\left(\mathbf{E}[a_{ij}(0)A^\dagger_{ij}(0)] +\lambda_{ij}\right) \right.\notag \\
&\qquad{}\left.+ 2\sum_{i,j}\mathbf{E}\left[\sum_{k=0}^{K_\rho-1}a_{ij}(k-K_\rho)A^\dagger_{ij}(0)\right]\right). 
\end{align*}

Now we proceed to prove~\autoref{eq:term-1-tbp-map-delay}.
\begin{align}
&\mathbf{E}\big{[}2{\textstyle \sum\limits_{i,j}}q_{ij}(t)\big{(}A^\dagger_{ij}(t)\!-\!D^\dagger_{ij}(t)\big{)}\big{]} \notag \\
=&2\Big{(}\mathbf{E}\big{[}\sum_{i,j}q_{ij}(t)A^\dagger_{ij}(t)\big{]} - \mathbf{E}\big{[}\sum_{i,j}q_{ij}(t)D^\dagger_{ij}(t)\big{]}\Big{)}. \label{eq:map-delay-bound-1}
\end{align}

Focusing on the first expectation term, $\mathbf{E}\big{[}\sum_{i,j}q_{ij}(t)A^\dagger_{ij}(t)\big{]}$, above, we have
\begingroup
\allowdisplaybreaks
\begin{align}
&\mathbf{E}\Big{[}\sum_{i,j}q_{ij}(t)A^\dagger_{ij}(t)\Big{]} \notag \\
=&\mathbf{E}\Big{[}\sum_{i,j}\big{(}q_{ij}(t\!-\!K_\rho)\!-\!\sum_{k=0}^{K_\rho-1}d_{ij}(t\!-\!K_\rho\!+\!k)\!+\!\sum_{k=0}^{K_\rho-1}a_{ij}(t\!-\!K_\rho+k)\big{)}A^\dagger_{ij}(t)\Big{]} \label{eq:map-delay-bound-1-1} \\
\le&\mathbf{E}\Big{[}\sum_{i,j}\big{(}q_{ij}(t-K_\rho)+\sum_{k=0}^{K_\rho-1}a_{ij}(t-K_\rho+k)\big{)}A^\dagger_{ij}(t)\Big{]} \label{eq:map-delay-bound-1-2} \\
=&\mathbf{E}\Big{[}\sum_{i,j} q_{ij}(t-K_\rho)A^\dagger_{ij}(t) \Big{]} + \mathbf{E}\Big{[}\sum_{i,j}\sum_{k=0}^{K_\rho-1}a_{ij}(t-K_\rho+k)A^\dagger_{ij}(t)\Big{]} \notag \\
=&\mathbf{E}\Big{[}\sum_{i,j} q_{ij}(t-K_\rho)A^\dagger_{ij}(t) \Big{]} + \sum_{i,j}\mathbf{E}\Big{[}\sum_{k=0}^{K_\rho-1}a_{ij}(t-K_\rho+k)A^\dagger_{ij}(t)\Big{]} \notag \\
=&\mathbf{E}\Big{[}\sum_{i,j} q_{ij}(t\!-\!K_\rho)\mathbf{E}\big{[}A^\dagger_{ij}(t) \mid H(t\!-\!K_\rho) \big{]}\Big{]} \!+\! \sum_{i,j}\mathbf{E}\Big{[}\sum_{k=0}^{K_\rho-1}a_{ij}(t\!-\!K_\rho+k)A^\dagger_{ij}(t)\Big{]} \notag \\
\le&\mathbf{E}\Big{[}\sum_{i,j} q_{ij}(t\!-\!K_\rho)\big{(}\Lambda^\dagger_{ij}(t) \!+\! \sum_{(l,w)\in Q_{ij}^\dagger}\zeta_{lw}(K_\rho) \big{)}\Big{]} \notag \\
&\qquad{}+ \sum_{i,j}\mathbf{E}\Big{[}\sum_{k=0}^{K_\rho-1}a_{ij}(t\!-\!K_\rho+k)A^\dagger_{ij}(t)\Big{]} \label{eq:map-delay-bound-1-3-1} \\
\le&\mathbf{E}\Big{[}\sum_{i,j} q_{ij}(t-K_\rho) \xi(K_\rho) \Big{]} + \sum_{i,j}\mathbf{E}\Big{[}\sum_{k=0}^{K_\rho-1}a_{ij}(t-K_\rho+k)A^\dagger_{ij}(t)\Big{]} \label{eq:map-delay-bound-1-3-2} \\
=&\xi(K_\rho)\mathbf{E}\big{[}\|Q(t-K_\rho)\|_1\big{]} + \sum_{i,j}\mathbf{E}\Big{[}\sum_{k=0}^{K_\rho-1}a_{ij}(t-K_\rho+k)A^\dagger_{ij}(t)\Big{]}.
\end{align}
\endgroup

\autoref{eq:map-delay-bound-1-1} is due to~\autoref{eq:queueing-dynamics-entrywise}. 
Inequality \autoref{eq:map-delay-bound-1-2} 
is due to $d_{ij}(t)\ge 0$ for any $i,j$ and $t$. 
Inequality~\autoref{eq:map-delay-bound-1-3-1} is due to~\autoref{fact:dependent-arrival-rate-map}. 
Inequality \autoref{eq:map-delay-bound-1-3-2} is due to~\autoref{eq:load-bound}. 


Using the weak departure inequality~\autoref{eq:QPS-property}, we have the second expectation term on the RHS of~\autoref{eq:map-delay-bound-1}, 
\[
\mathbf{E}\big{[}\sum_{i,j}q_{ij}(t)D^\dagger_{ij}(t)\big{]}=\mathbf{E}\Big{[}\mathbf{E}\big{[}\sum_{i,j}q_{ij}(t)D^\dagger_{ij}(t)\mid Q(t)\big{]}\Big{]}\ge \mathbf{E}[\|Q(t)\|_1].
\] 
Therefore, 
we have, in steady state 
\begingroup
\allowdisplaybreaks
\begin{align}
&\mathbf{E}\big{[}2{\textstyle \sum\limits_{i,j}}q_{ij}(t)\big{(}A^\dagger_{ij}(t)\!-\!D^\dagger_{ij}(t)\big{)}\big{]} \notag \\
\le&2\Big{(}\xi(K_\rho)\mathbf{E}\big{[}\|Q(t-K_\rho)\|_1\big{]}+ \sum_{i,j}\mathbf{E}\Big{[}\sum_{k=0}^{K_\rho-1}a_{ij}(t-K_\rho+k)A^\dagger_{ij}(t)\Big{]} - \mathbf{E}[\|Q(t)\|_1]\Big{)} \notag \\
\le&2\Big{(}\xi(K_\rho)\mathbf{E}\big{[}\|\bar{Q}\|_1\big{]}+ \sum_{i,j}\mathbf{E}\Big{[}\sum_{k=0}^{K_\rho-1}a_{ij}(t-K_\rho+k)A^\dagger_{ij}(t)\Big{]} - \mathbf{E}[\|\bar{Q}\|_1]\Big{)} \label{eq:steady-state-prob} \\
=&2\Big{(}\big{(}\xi(K_\rho)-1\big{)}\mathbf{E}\big{[}\|\bar{Q}\|_1\big{]} + \sum_{i,j}\mathbf{E}\Big{[}\sum_{k=0}^{K_\rho-1}a_{ij}(k-K_\rho)A^\dagger_{ij}(0)\Big{]}\Big{)}.
\end{align}
\endgroup
\autoref{eq:steady-state-prob} is due to the fact $\mathbf{E}\big{[}\|Q(\tau)\|_1\big{]}=\mathbf{E}\big{[}\|\bar{Q}\|_1\big{]}$ for any $\tau$ in steady state.

Now we prove~\autoref{eq:term-2-tbp-map-delay}. By simple calculations, we have
\begin{align}
&\mathbf{E}\big{[}\big{(}a_{ij}(t)-d_{ij}(t)\big{)}\big{(}A^\dagger_{ij}(t)-D^\dagger_{ij}(t)\big{)}\big{]}\notag \\
=&\mathbf{E}[a_{ij}(t)A^\dagger_{ij}(t) - d_{ij}(t)A^\dagger_{ij}(t)-a_{ij}(t)D^\dagger_{ij}(t)+d_{ij}(t)D^\dagger_{ij}(t)]\nonumber\\
=&\mathbf{E}[a_{ij}(t)A^\dagger_{ij}(t)] - \mathbf{E}[d_{ij}(t)A^\dagger_{ij}(t)]\notag \\
&\qquad{}-\mathbf{E}[a_{ij}(t)D^\dagger_{ij}(t)]+\mathbf{E}[d_{ij}(t)D^\dagger_{ij}(t)]\nonumber\\
=&\mathbf{E}[a_{ij}(t)A^\dagger_{ij}(t)] - \mathbf{E}[d_{ij}(t)A^\dagger_{ij}(t)]\notag \\
&\qquad{}-\mathbf{E}[a_{ij}(t)D^\dagger_{ij}(t)]+\mathbf{E}[d_{ij}(t)]\label{eq:using-dep-fact} \\
\le&\mathbf{E}[a_{ij}(t)A^\dagger_{ij}(t)]+\mathbf{E}[d_{ij}(t)]. \label{eq:using-nonnegative-fact}
\end{align}

In arriving at~\autoref{eq:using-dep-fact}, we have used \autoref{fact:departure-general-2}, {\it i.e.,} for any $i,j$, $d_{ij}(t)D^\dagger_{ij}(t)= d_{ij}(t)$. Inequality~\autoref{eq:using-nonnegative-fact} is due to $a_{ij}(t)\ge 0,d_{ij}(t)\ge 0$ for any $i,j$ at any time slot $t$. 

We note that $\{Q(t),X(t)\}_{t=0}^{\infty}$ is positive recurrent. Therefore,
we have, in steady state, for any $i,j$, $\mathbf{E}[d_{ij}(t)]\!=\!\lambda_{ij}$.
Hence, we have, in steady state,  
\begin{align}
&\mathbf{E}\big{[}\sum_{i,j}\big{(}a_{ij}(t)\!-\!d_{ij}(t)\big{)}\big{(}A^\dagger_{ij}(t)\!-\!D^\dagger_{ij}(t)\big{)}\big{]} \notag \\
=&\sum_{i,j}\mathbf{E}\big{[}\big{(}a_{ij}(t)\!-\!d_{ij}(t)\big{)}\big{(}A^\dagger_{ij}(t)\!-\!D^\dagger_{ij}(t)\big{)}\big{]} \notag \\
\le&\sum_{i,j}\left(\mathbf{E}[a_{ij}(0)A^\dagger_{ij}(0)] +\lambda_{ij}\right).\notag
\end{align}

\noindent
{\bf Part (II).} 
Now we proceed to prove (II). Since any $a_{ij}(t)$ now is independent of each other,
we have, for any $i,j$
\begingroup
\allowdisplaybreaks
\begin{align}
    &\mathbf{E}[a_{ij}(t)A^\dagger_{ij}(t)] \notag \\
    =& \mathbf{E}\left[a_{ij}(t)\left(A^\dagger_{ij}(t)-a_{ij}(t)\right) + a_{ij}^2(t)\right] \notag\\
    =&\mathbf{E}\left[a_{ij}(t)\left(A^\dagger_{ij}(t)-a_{ij}(t)\right)\right]+\mathbf{E}\left[a_{ij}^2(t)\right] \notag\\
    =&\mathbf{E}\left[a_{ij}(t)\right]\mathbf{E}\left[\sum_{(l,w)\in Q_{ij}^\dagger\setminus\{(i,j)\}}a_{lw}(t)\right]+\mathbf{E}\left[a_{ij}^2(t)\right] \notag\\
    =&\lambda_{ij}(\Lambda^\dagger_{ij}-\lambda_{ij}) + \mathbf{E}\left[a_{ij}^2(t)\right] \notag\\
    =&\sigma_{ij}^2+\lambda_{ij}\Lambda_{ij}^\dagger, \label{eq:exp-1}
\end{align}
\endgroup
where $\sigma^2_{ij}\triangleq \mathbf{E}\left[a_{ij}^2(t)\right] - \lambda^2_{ij}$ is the steady-state variance of $a_{ij}(t)$.

Similarly, we have, for any $i,j$, any $t$ and any integer $k>0$
\begin{equation}\label{eq:exp-2}
    \mathbf{E}[a_{ij}(t)A^\dagger_{ij}(t+k)]=\lambda_{ij}\Lambda_{ij}^\dagger + \theta_{ij}(k).
\end{equation}
where
$\theta_{ij}(k)\triangleq \mathbf{E}\left[a_{ij}(t)a_{ij}(t+k)\right] - \lambda^2_{ij}$ is
the auto-correlation in $a_{ij}(t)$ in steady state.

Using \autoref{eq:final-delay-bound-general}, \autoref{eq:exp-1}, and \autoref{eq:exp-2}, we
have,
\begin{align*}
\mathbf{E}[\|\bar{Q}\|_1]\!\le\!\frac{1}{2(1\!-\!\xi(K_\rho))}\left(\sum_{i,j}\left(\sigma_{ij}^2 \!+\! \lambda_{ij}\Lambda_{ij}^\dagger+\lambda_{ij}\right) \!+\! 2\sum_{i,j}\sum_{k=1}^{K_\rho}\left(\lambda_{ij}\Lambda_{ij}^\dagger \!+\!\theta_{ij}(k)\right)\right). \label{eq:final-delay-bound-general-independent}
\end{align*} 

\section{How Mean Delay of QPS-r Scales with r}\label{app:diff-r}

\begin{figure}[t]
\centering
\includegraphics[width=\textwidth]{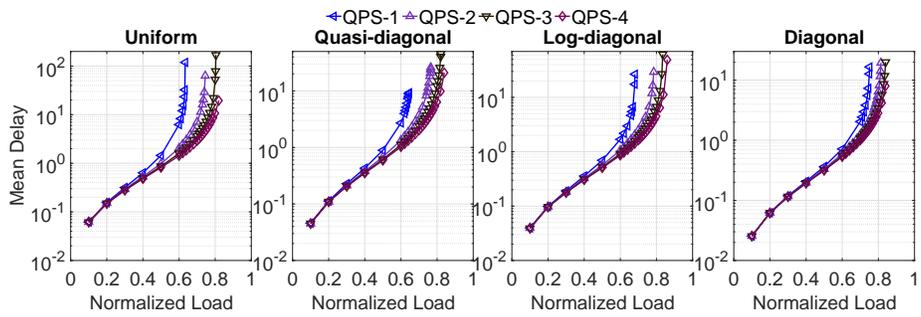}
\caption{Mean delays of QPS-r ($r=1,2,3,4$) under the $4$ traffic patterns.}\label{fig: delay-load-qpsr}
\end{figure}

\autoref{fig: delay-load-qpsr} presents the mean delay performance of QPS-r, with
$r=1,2,3,4$, as a function of the offered loads, under Bernoulli
{\it i.i.d.} arrivals with the $4$ traffic patterns.
It shows that both the maximum sustainable throughput and the mean delay performance
improve as $r$ increases. However, as $r\!>\! 3$, the improvement becomes marginal.


\end{document}